\newtheorem{theorem}{Theorem}
\newtheorem{definition}[theorem]{Definition}
\newtheorem{definitions}[theorem]{Definitions}
\newtheorem{remark}[theorem]{Remark}
\newtheorem{corollary}[theorem]{Corollary}
\newtheorem{proposition}[theorem]{Proposition}
\newtheorem{lemma}[theorem]{Lemma}
\def\2{{\bf 2}}
\def\P2{{\rm Par}(\2)}
\def\O2{{\rm Op}(\2)}
\def\Pn2{{\rm Par}^{(n)}(2)}
\def\N{\mathds{N}}
\def\vv{\lower 4pt \hbox{$\buildrel
{\textstyle{v}}\over{\scriptstyle{\sim}}$}}
\def\uu{\lower 4pt \hbox{$\buildrel
{\textstyle{u}}\over{\scriptstyle{\sim}}$}}
\def\ww{\lower 4pt \hbox{$\buildrel
{\textstyle{w}}\over{\scriptstyle{\sim}}$}}
\def\r{\rho}
\def\3{{\bf 3}}
\def\J2{{\rm J}(\2)}
\def\N{{\Bbb N}}
\def\dom{{\rm dom \;}}
\def\powerset{{\mathcal P}}
\def\uu{{\bf u}}
\def\vv{{\bf v}}
\def\ww{{\bf w}}
\def\0{{\bf 0}}
\def\1{{\bf 1}}
\DeclareMathOperator{\Lin}{Lin}
\DeclareMathOperator{\Tour}{Tour}
\DeclareMathOperator{\im}{im}
\DeclareMathOperator{\ded }{ded}
\begin{document}

\title{Hereditary rigidity, separation and density\\ {\large  In memory of  Professor I.G.
Rosenberg.}}

\author{
\and
\IEEEauthorblockN{Lucien  Haddad \hspace{2.5cm}}%
\IEEEauthorblockA
{Dept. of Mathematics \& CS\hspace{2.5cm}\\
  Royal Military College of Canada \hspace{2.5cm}\\
Kingston, Ontario, Canada \hspace{2.5cm}\\  Email: haddad-l@rmc.ca \hspace{2.5cm}}
\and
\IEEEauthorblockN{\hspace{2cm}}
\and
\IEEEauthorblockN{Masahiro Miyakawa \hspace{-1cm}} %
\IEEEauthorblockA{Tsukuba University of Technology\hspace{-1cm}\\
4-12-7 Kasuga, Tsukuba,
\hspace{-1cm}\\
Ibaraki 305-8521, Japan \hspace{-1cm}\\
Email: mamiyaka@cs.k.tsukuba-tech.ac.jp\hspace{-1cm}}

 \and
 \IEEEauthorblockN{\hspace{2cm}}
                         \and
                  \IEEEauthorblockN{Maurice Pouzet \hspace{1.8cm}}%
\IEEEauthorblockA{ICJ, Univ. Claude-Bernard Lyon1,  
 Villeurbanne, France\\ Dept. of Math \& Stat, 
 Univ. of Calgary,\\ Calgary, Alberta, Canada  \hspace{1cm}\\
  Email: pouzet@univ-lyon1.fr \hspace{1cm}}
 \and
\IEEEauthorblockN{\hspace{2cm}}
\and
      \IEEEauthorblockN{Hisayuki Tatsumi \hspace{1cm}}%
\IEEEauthorblockA{Tsukuba University of Technology\hspace{1cm}\\
4-12-7 Kasuga, Tsukuba,
\hspace{1cm}\\
Ibaraki 305-8521, Japan \hspace{1cm}\\
Email: tatsumi@cs.k.tsukuba-tech.ac.jp\hspace{1cm}}}
\maketitle
      
\begin{abstract} We continue the investigation of systems of hereditarily rigid relations started in Couceiro, Haddad, Pouzet and Sch\"olzel \cite{couceiro-haddad-pouzet-scholzel}. We observe that on a set $V$ with $m$ elements, there is a hereditarily rigid set $\mathcal R$ made of $n$ tournaments if and only if $m(m-1)\leq 2^n$. We ask if the same inequality  holds  when the tournaments are replaced by linear orders. This problem has  an  equivalent formulation in terms  of separation of linear orders. Let $h_{\Lin}(m)$ be the least cardinal $n$ such that there is a family $\mathcal R$ of $n$ linear orders on an $m$-element set $V$ such that any two distinct ordered pairs of distinct elements of $V$ are separated by some member of $\mathcal R$,  then  $ \lceil \log_2 (m(m-1))\rceil\leq h_{\Lin}(m)$  with equality if $m\leq 7$. We ask whether the equality holds for every $m$.  We prove that $h_{\Lin}(m+1)\leq h_{\Lin}(m)+1$. If $V$ is infinite, we show that $h_{\Lin}(m)= \aleph_0$ for $m\leq 2^{\aleph_0}$.  More generally, we prove that the two  equalities   $h_{\Lin}(m)= log_2 (m)= d(\Lin(V))$ hold,  where $\log_2 (m)$ is the least cardinal $\mu$ such that $m\leq 2^\mu$, and  $d(\Lin(V))$ is the topological density of the set  $\Lin(V)$ of linear orders on $V$ (viewed as a subset of the  power set $\powerset (V\times V)$ equipped with the product topology). These equalities follow from the {\it Generalized Continuum Hypothesis}, but we do not know whether they hold without any set theoretical hypothesis.
\end{abstract}

\baselineskip=13.5pt

\section{Introduction}

The motivation for this paper is a question which can be better formulated in terms of Social Choice Theory. Let us consider a committee  of $n$ members $c_1, \dots, c_n$ having to express  its preferences among $m$ candidates. Each member $c_k$  writes his own preferences among the $m$ candidates in a linearly ordered list $\ell_k$ of the candidates. The \emph{profile} of an ordered pair  $(x,y)$ of two different candidates $x$ and $y$ is the $0$-$1$ list  $(\ell_k(x,y))_{1\leq k\leq n}$, where $\ell_k(x,y)=1$ if $x$ is preferred to $y$, and $\ell_k(x,y)=0$ otherwise. As the profiles of the  $m(m-1)$ ordered pairs belong to  $\{0,1\}^n$,  if $m(m-1)> 2^n$, there are  two distinct ordered pairs with the same profile. The question is: does the converse hold? That is, if $m(m-1)\leq 2^n$, are there $n$ lists  $(\ell_k)_{1\leq k\leq n}$ yielding $m(m-1)$ distinct profiles? As we will see, the answer is positive for $m\leq 7$. For other integers we do not know.  

Tackling this question, we do not limit ourselves to finite sets. Considering a set $V$ of cardinality $m$, let $h_{\Lin}(m)$ be the least cardinal $n$ such that there is a family $\mathcal R$ of $n$ linear orders on $V$ such that any distinct ordered pairs $(x,y)$ and $(x',y')$ of distinct elements of $V$ yield distinct profiles. This parameter plays a role in the investigation of systems of hereditarily rigid relations started in Couceiro, Haddad, Pouzet and Sch\"olzel \cite{couceiro-haddad-pouzet-scholzel}.  An $h$-ary relation $\r$ on a  set $V$ is said to be \emph{hereditarily rigid} if the  unary partial functions on $V$ that preserve $\r$ are the subfunctions of the identity map or of constant maps. A family of relations ${\mathcal  R}$ is said to be \emph{hereditarily   rigid} if the unary partial functions on $V$ that preserve  every $\r \in {\mathcal  R}$ are the subfunctions of the identity map or of constant maps. As it turns out, a family of tournaments  ${\mathcal  R}$ is  hereditarily   rigid if and only if any two distinct ordered pairs $(x,y)$ and $(x',y')$ of distinct elements of $V$ yield distinct profiles of tournaments. We note that for $m<\aleph_0$  we may find such a family $\mathcal R$ made of $n$ tournaments if and only if $m(m-1)\leq 2^n$, that is $\log_2(m(m-1))\leq n$. We ask if the same inequality  holds  when  tournaments are replaced by linear orders, that is, wether  $h_{\Lin}(m)=\lceil \log_2(m(m-1))\rceil$.

We show that $h_{\Lin}(m)= \aleph_0$ if $\aleph_0\leq m\leq 2^{\aleph_0}$.  We show more generally that $h_{\Lin}(m)= \log_2 (m)= d(\Lin(V))$,  where $\log_2 (m)$ is the least cardinal $\mu$ such that $m\leq 2^\mu$ and  $d(\Lin(V))$ is the topological density of the set  $\Lin(V)$ of linear orders on $V$ (viewed as a subset of the  power set $\powerset (V\times V)$ equipped with the product topology). The last set of equalities   follows from   GCH (Generalized Continuum Hypothesis); we do not know if it holds without any set theoretical hypothesis. The finite case is more substantial, but apparently more difficult. In that direction, we verify  that  $h_{\Lin}(m)=\lceil \log_2(m(m-1))\rceil$ for $m\leq 7$ and prove that 
 $h_{\Lin}(m)\leq h_{\Lin}(m+1)\leq h_{\Lin}(m)+1$ for all $m<\aleph_0$.

 Notations in this paper are quite  elementary. The \emph{diagonal} of a set $X$ is the set $\Delta_X:= \{(x,x): x\in X\}$. We denote by  $\powerset (X)$ the collection of subsets of  $X$, by $X^m$ the set of $m$-tuples $(x_1, \dots,  x_m)$ of  $X$, by ${X \choose m}$ the subset of $\powerset (X)$ made of $m$-element subsets of $X$, and
by $[X]^{<\omega}$ the collection of finite subsets of $X$. The cardinality of  $X$ is denoted by $\vert X\vert$. If $\kappa$ denotes a cardinal, $2^{\kappa}$ is the cardinality of   the power set $\powerset (X)$ of a set $X$ of cardinality $\kappa$; we denote by $2^{< \kappa}$ the supremum of $2^{\mu}$ for $\mu<\kappa$. 
If  $\kappa$ is an infinite cardinal, we set $\log_2(\kappa)$ for the least cardinal $\mu$ such that $\kappa \leq 2^{\mu}$. If $\kappa$ is an integer, we use $\log_2(\kappa)$ in the ordinary sense, hence  the least integer $\mu$ such that $\kappa \leq 2^{\mu} $ is $\lceil \log_2\kappa  \rceil$.  We denote by $\aleph_0$ the first infinite cardinal.  We refer the reader to \cite{jech} for further background about axioms of set theory if needed. The proof of one  of our results (Theorem \ref{thm:hered-rigidity-bound})    
 relies on   the famous theorem of Sperner (see \cite{engel}). To state it, we recall that an \emph{antichain} of subsets of a set $X$ is a collection of subsets such that none is contained in another.

\begin{theorem}
  Let $n$ be a non-negative integer.  The largest size  of an  antichain family of subsets of  an $n$-element set  $X$ is $\binom{n}{\lfloor n/2 \rfloor}$. It is only realized by
 $\binom{X}{\lfloor n/2 \rfloor}$ and $\binom{X}{\lceil n/2 \rceil}$.
\end{theorem}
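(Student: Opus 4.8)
The plan is to prove the cardinality bound by Lubell's double counting over maximal chains, and then to extract the uniqueness statement from the equality case. Recall that a \emph{maximal chain} in $\powerset(X)$ is a sequence $\emptyset = C_0 \subset C_1 \subset \cdots \subset C_n = X$ with $|C_i| = i$; these chains correspond bijectively to the $n!$ linear orderings of $X$, each ordering prescribing the order in which elements are adjoined. First I would count, in two ways, the pairs $(A,C)$ where $A$ is a member of the antichain $\mathcal{F}$ and $C$ is a maximal chain with $A \in C$. A fixed $k$-element set lies on exactly $k!\,(n-k)!$ maximal chains (order the elements inside $A$, then those outside); on the other hand, since any two members of a chain are comparable while $\mathcal{F}$ is an antichain, each maximal chain contains \emph{at most one} member of $\mathcal{F}$. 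This yields
\[
\sum_{A \in \mathcal{F}} |A|!\,(n-|A|)! \ \leq\ n!, \qquad\text{equivalently}\qquad \sum_{A \in \mathcal{F}} \frac{1}{\binom{n}{|A|}} \ \leq\ 1,
\]
which is the LYM inequality.

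The size bound is then immediate: since $\binom{n}{k} \leq \binom{n}{\lfloor n/2\rfloor}$ for every $k$, each summand is at least $1/\binom{n}{\lfloor n/2\rfloor}$, so $|\mathcal{F}|/\binom{n}{\lfloor n/2\rfloor} \leq 1$, giving $|\mathcal{F}| \leq \binom{n}{\lfloor n/2\rfloor}$. The families $\binom{X}{\lfloor n/2\rfloor}$ and $\binom{X}{\lceil n/2\rceil}$ are antichains of exactly this size, so the bound is tight.

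For uniqueness, suppose $|\mathcal{F}| = \binom{n}{\lfloor n/2\rfloor}$, so both inequalities above are equalities. Equality in the layerwise comparison forces every $A \in \mathcal{F}$ to have $\binom{n}{|A|}$ maximal, i.e. $|A| \in \{\lfloor n/2\rfloor, \lceil n/2\rceil\}$, and equality in the chain count forces every maximal chain to meet $\mathcal{F}$ exactly once. When $n$ is even the middle layer is unique, so at once $\mathcal{F} = \binom{X}{n/2}$. The main obstacle is the odd case $n = 2t+1$, where a priori $\mathcal{F}$ could straddle the two equal middle layers $\binom{X}{t}$ and $\binom{X}{t+1}$; ruling this out is the crux of the argument. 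To handle it I would set $\mathcal{A} = \mathcal{F} \cap \binom{X}{t}$ and $\mathcal{B} = \mathcal{F} \cap \binom{X}{t+1}$, so that $|\mathcal{A}| + |\mathcal{B}| = \binom{n}{t}$, and invoke the \emph{normalized matching} (local LYM) property: the upper shadow $\partial^+\mathcal{A}$ satisfies $|\partial^+\mathcal{A}|/\binom{n}{t+1} \geq |\mathcal{A}|/\binom{n}{t}$, hence $|\partial^+\mathcal{A}| \geq |\mathcal{A}|$ here since the two coefficients coincide.

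Because $\mathcal{F}$ is an antichain, $\mathcal{B}$ and $\partial^+\mathcal{A}$ are disjoint, so $|\mathcal{B}| \leq \binom{n}{t+1} - |\partial^+\mathcal{A}| \leq \binom{n}{t} - |\mathcal{A}|$; comparing with $|\mathcal{B}| = \binom{n}{t} - |\mathcal{A}|$ forces $|\partial^+\mathcal{A}| = |\mathcal{A}|$. The remaining point, which I would establish by the edge count between consecutive layers together with the connectivity of the adjacency between them, is that equality $|\partial^+\mathcal{A}| = |\mathcal{A}|$ can hold only when $\mathcal{A}$ is empty or the entire layer. This then yields $\mathcal{F} = \binom{X}{t+1} = \binom{X}{\lceil n/2\rceil}$ or $\mathcal{F} = \binom{X}{t} = \binom{X}{\lfloor n/2\rfloor}$, completing the proof. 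The one ingredient to verify carefully is the strictness of local LYM for proper nonempty subfamilies, which is where the genuine combinatorial work of the uniqueness part resides.
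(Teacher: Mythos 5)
Your statement is one the paper does not prove at all: it is the classical theorem of Sperner, quoted as background and cited to Engel's monograph, so there is no in-paper proof to compare against. Judged on its own, your proposal follows the standard and correct route: Lubell's chain-counting gives the LYM inequality $\sum_{A\in\mathcal{F}} 1/\binom{n}{|A|}\leq 1$; strict unimodality of the binomial coefficients turns the equality case into the statement that every member of $\mathcal{F}$ lies in a middle layer; and the even case is then immediate. Your treatment of the odd case $n=2t+1$ is also sound: with $\mathcal{A}=\mathcal{F}\cap\binom{X}{t}$ and $\mathcal{B}=\mathcal{F}\cap\binom{X}{t+1}$, the antichain condition makes $\mathcal{B}$ and the upper shadow $\partial^{+}\mathcal{A}$ disjoint, and the local LYM bound (here just an edge count in the biregular bipartite graph between the two middle layers, where both degrees equal $t+1$) combines with $|\mathcal{A}|+|\mathcal{B}|=\binom{n}{t}$ to force $|\partial^{+}\mathcal{A}|=|\mathcal{A}|$. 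The one step you leave as a claim --- that this equality holds only for $\mathcal{A}=\emptyset$ or $\mathcal{A}=\binom{X}{t}$ --- is where the remaining work sits, but it completes exactly as you indicate: equality in the edge count means every edge incident to $\partial^{+}\mathcal{A}$ returns to $\mathcal{A}$, so every $t$-subset of every member of $\partial^{+}\mathcal{A}$ lies in $\mathcal{A}$; hence $\mathcal{A}$ is closed under the exchange moves $A\mapsto (A\cup\{x\})\setminus\{y\}$, and connectivity of this exchange (Johnson) graph on $\binom{X}{t}$ shows a nonempty such $\mathcal{A}$ is the whole layer. To turn the outline into a complete proof you would need to write out that connectivity argument explicitly and, strictly speaking, the local LYM inequality you invoke, though for the equal-size middle layers it is a two-line double count; neither presents any difficulty.
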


Let $2:= \{0, 1\}$ be ordered with $0<1$. The poset $2^{X}$ equipped with  the product order is isomorphic to the powerset $\powerset(X)$ ordered by inclusion. Also note that if $Y$ is any set, then  the posets  $(2\times 2)^ Y$,  $2^{Y}\times 2^{Y}$ and $2^{Y\times 2}$, all equipped with the product order, are isomorphic. If $\vert Y\vert =m$, $m\in \N$,  Sperner's theorem asserts  that the maximum sized antichain  in these posets, once   identified to  $0$-$1$-sequences, is made of sequences  containing roughly as many $0$ as $1$. This is the key for proving Theorem \ref{thm:hered-rigidity-bound}.

We present first the rigidity notions, then the case of tournaments and linear orders and we conclude with  density properties. 

\section{Hereditary rigidity}

In \cite {couceiro-haddad-pouzet-scholzel}, Couceiro {\it et al.}  studied a general notion of rigidity for relations and sets of relations w.r.t. partial operations. They show a noticeable difference between  rigidity w.r.t.  to unary operations and  rigidity  w.r.t. to operations of arity at least two. Here we consider the rigidity notion w.r.t.  unary operations, mostly when the relations are binary. 
Considering hereditarily rigid sets of binary relations,  we  give  an  exact upper bound on the size of their  domain (Theorem \ref{thm:hered-rigidity-bound}).

Let $V$ be a set.  
A \emph{partial function}  on $V$ is a map $f$ from a subset of $V$, its \emph{domain}, denoted by   $\dom(f)$,  to another, possibly different subset, its \emph{image}, denoted by   $\im(f)$.  A partial function $f$ is \emph{constant} if it does not have two distinct values. If $A$ is a subset  of $V$, the \emph{restriction}  of $f$ to $A$, denoted by $f_{\restriction A}$, is the map induced by $f$ on $A\cap \dom(f)$. A \emph{subfunction} of $f$ is any restriction of $f$ to a subset of its domain.

Let $h\ge 1$ be an integer, an  \emph{$h$-ary relation on} $V$ is a subset $\r$ of
$V^h$. Sometimes, we identify $\r$ with its characteristic function,  that is, we write  $\rho( v_1, \dots, v_h)=1$ if $( v_1, \dots, v_h)\in \r$ and $0$ otherwise. If $A$ is a subset of $V$, the restriction of $\r$ to $A$, denoted by $\r_{\restriction A}$,  is $\r\cap A^h$. If $\mathcal R$ is a set of relations on $V$, we set $\mathcal R_{\restriction A} := \{\r_{\restriction A}: \r\in \mathcal R\}$.

 We say that a partial function $f$ \emph{preserves} the $h$-ary relation $\r$, or $\r$ {\it is invariant under} $f$,
if for every $h$-tuple $(v_1, \dots, v_h)$ belonging to   $(\dom(f))^{h}\cap \r$, its image $(f(v_1), \dots, f(v_h))$ belongs to $\r$. We say that a partial function $f$ \emph{preserves} a  family of relations ${\mathcal  R}$ on $V$ if it preserves each $\r\in \mathcal R$. If $A:= \dom(f)$, we also  say that $f$ is a  \emph{homomorphism}  of $\mathcal R_{\restriction A}$ in $\mathcal R$.
 
A $h$-ary relation $\r$ on a  set $V$ is said to be {\em{rigid}} if  the identity map is the only unary function on $V$ that preserves $\r$. The relation $\r$  is  \emph{semirigid} if every unary function that preserves $\rho$ is the identity map or a constant map.  It is   \emph{hereditarily semirigid} if the  unary partial functions on $V$ that preserve $\r$ are the subfunctions of the identity map or of constant maps. A family of relations ${\mathcal  R}$ on $V$ is said to be \emph{hereditarily semirigid} if the unary partial functions on $V$ that preserve  every $\r \in {\mathcal  R}$ are the subfunctions of the identity map or of constant maps.  In order to agree with \cite {couceiro-haddad-pouzet-scholzel}, we delete the prefix "semi" in the sequel. 
Rigid binary relations are introduced in \cite{v-p-h}, semirigid relations in \cite{miyakawa2, lan-pos, miyakawa1,zadori}. 
\begin{proposition}\label{prop:hereditaryrigid}
For a family of relations ${\mathcal  R}$ on a set $V$ the following properties are equivalent:
\begin{enumerate}[(i)]
\item $\mathcal R$ is hereditarily  rigid; 
\item For every $2$-element subset $A$ of $V$, every homomorphism $f$ with domain $A$ of $\mathcal R_{\restriction A}$ in $\mathcal R$  is either constant or a subfunction of the identity map; 
\item For every $2$-element subset $A$ of $V$, every $1$-$1$ homomorphism $f$ with domain $f$ of $\mathcal R_{\restriction A}$ in $\mathcal R$  is  a subfunction of the identity map.  

\end{enumerate}
\end{proposition}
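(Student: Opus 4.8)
The plan is to establish the cyclic chain (i) $\Rightarrow$ (ii) $\Rightarrow$ (iii) $\Rightarrow$ (i), the whole point being to reduce the global rigidity condition (i) to its restrictions on two-element domains. The enabling observation, which I would record first, is that preservation passes to restrictions: if a partial function $f$ preserves $\mathcal R$ and $B\subseteq \dom(f)$, then $f_{\restriction B}$ preserves $\mathcal R$ as well. This is immediate, since for every $\r\in\mathcal R$ of arity $h$ any tuple of $B^{h}\cap \r$ already lies in $(\dom(f))^{h}\cap\r$, where $f_{\restriction B}$ agrees with $f$, so its image remains in $\r$.

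Granting this, (i) $\Rightarrow$ (ii) and (ii) $\Rightarrow$ (iii) are short. For the former, a homomorphism $f$ with two-element domain $A$ is in particular a partial function preserving $\mathcal R$, so (i) makes it a subfunction of the identity or of a constant map; on a two-element domain these two options are exactly $f=\id_A$ and $f$ being constant, which is the conclusion of (ii). For the latter, a $1$-$1$ map on a two-element set cannot be constant, so the alternative offered by (ii) reduces to a subfunction of the identity, giving (iii).

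The substantive step is (iii) $\Rightarrow$ (i). Let $f$ be any partial function preserving $\mathcal R$. If $f$ takes at most one value it is a subfunction of a constant map and we are done; otherwise I claim $f$ fixes every point of its domain. Fix $x\in\dom(f)$; since $f$ is not constant there is $y\in\dom(f)$ with $f(y)\neq f(x)$, whence $x\neq y$ and $f_{\restriction\{x,y\}}$ is a $1$-$1$ homomorphism on a two-element domain (injective because $f(x)\neq f(y)$, and preserving $\mathcal R$ by the restriction observation above). Then (iii) forces $f_{\restriction\{x,y\}}$ to be a subfunction of the identity, so in particular $f(x)=x$; as $x$ was arbitrary, $f$ is a subfunction of the identity, which is (i). I expect no genuine obstacle here, only the routine bookkeeping on degenerate domains (empty or singleton), where $f$ is automatically a subfunction of a constant map and the dichotomy in (i) holds trivially.
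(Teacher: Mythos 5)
Your proof is correct, and the preliminary steps match the paper's: both treat (i) $\Rightarrow$ (ii) $\Rightarrow$ (iii) as immediate (resting on the observation that preservation passes to restrictions, which you make explicit and the paper leaves tacit). Where you diverge is in the case analysis for the substantive implication (iii) $\Rightarrow$ (i). The paper splits on whether $f$ is $1$-$1$: in the injective case it applies (iii) to every pair, and in the non-injective case it runs a fiber argument, choosing $x$ with $X:=f^{-1}(f(x))$ of size at least two and $f(x)\neq x$, showing that any $y\notin X$ would produce a $1$-$1$ pair restriction violating (iii), and concluding $X=V'$, i.e.\ $f$ is constant. You instead split on whether $f$ is constant: if not, then \emph{every} $x\in\dom(f)$ admits a witness $y$ with $f(y)\neq f(x)$, the pair restriction $f_{\restriction\{x,y\}}$ is automatically $1$-$1$, and (iii) directly forces $f(x)=x$, so $f$ is a subfunction of the identity. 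Your dichotomy is the cleaner one: it handles injective and non-injective $f$ uniformly, needs no careful choice of representative in a fiber, and proves the positive conclusion ($f$ fixes each point) rather than eliminating alternatives by contradiction. The paper's version buys nothing extra here; the two arguments are of the same length and depth, and yours is arguably easier to check.
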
 

\noindent \begin{proof}
Implications $(i)\Rightarrow (ii)$ and 
$(ii)\Rightarrow (iii)$ are immediate. We prove that implication 
$(iii)\Rightarrow (i)$ holds. Let $V' \subseteq V$ and $f$ be a  homomorphism with domain $V'$ of $\mathcal R_{\restriction V'} $ in $\mathcal R$. We need to prove that $f$ is either constant or a subfunction of the identity. We may suppose that $V'$ is not a singleton, otherwise, $f$ is constant. Suppose that  $f$ is $1$-$1$. Then $(iii)$ asserts that for every two-element subset $A$ of $V'$, $f_{\restriction A}$ is a subfunction of the identity map. It follows that $f$ is a subfunction of the identity and $(i)$ holds. If $f$ is not $1$-$1$ then  there is some $x\in V'$ such that $X:= f^{-1}(f (x))$ has a least two elements. We may suppose that $f(x)\not =x$. If there is  $y\in V' \setminus X$ then  $f_{\restriction \{x,y\}}$ is $1$-$1$ (indeed, $f(x)= f(y)$ amounts to $y\in X$ which is excluded)  and not the restriction of the identity since $f(x)\not = x$. Hence, $X= V'$ and $f$ is constant. 
\end{proof}

The above result has a particularly simpler form if the relations are binary and each one is either reflexive or irreflexive. To do this translation, we view such  binary relations on $V$  as maps from $V\times V\setminus \Delta_{V}$ to $2:= \{0,1\}$.

\begin{definition} Let $\mathcal R$ be a set of binary relations on $V$, each $\rho\in \mathcal R$  being either  reflexive  or irreflexive. Let  $(x, y) \in V\times V\setminus \Delta_{V}$. The \emph{profile} of $(x,y)$ with respect to $\mathcal {R}$ is $p _\mathcal R(x,y):= (\rho(x, y))_{\rho\in \mathcal R}$. The  \emph{profile} of $\mathcal R$ is the map $p_{\mathcal R}$ from $V\times V\setminus \Delta_{V}$ to $2^{\mathcal R}$, 
associating $p _\mathcal R(x,y)$ to each $(x,y)$. The \emph {double profile}   is the map  $\tilde p_{\mathcal R}$ associating the element $(\rho(x, y), \rho(y,x))_{\rho\in \mathcal R}$ of $(2\times 2)^{\mathcal R}$  to each ordered pair $(x,y)\in V\times V\setminus \Delta_{V}$. 
\end{definition}

Let $-$ be the involution defined on  $V\times V\setminus \Delta_{V}$ by $\overline {(x,y)}:= (y,x)$ for every $(x,y)\in V\times V\setminus \Delta_{V}$. Similarly,  let $-$ be the involution on  $2\times 2$  defined by  $\overline u:= (\beta, \alpha)$ for every $u:= (\alpha, \beta)\in 2\times 2$. If $\theta $ is any map from  a set $X$  to $2\times 2$,  let $\overline \theta$ be the composition of $\theta$ and $-$ , that is $\overline {\theta}(\r):= \overline {\theta(\r)}$ for every $\r\in X$. We say that a map  $\varphi:  V\times V\setminus \Delta_{V} \rightarrow (2\times 2)^X$ is \emph{self-dual} if $\varphi({\overline{(x,y)}})= \overline {\varphi (x,y)}$ for all $(x,y)\in \dom(\varphi)$.

\begin{lemma}\label{lem:cns}  Let  $V$ be a set. \begin{enumerate}
 \item A set $\mathcal R$ of binary relations on $V$, each one being either  reflexive  or irreflexive,  is hereditarily rigid if and only if $\tilde p_{\mathcal R}$, the double profile of $\mathcal R$, is $1$-$1$ and its range is an antichain of $(2\times 2)^{\mathcal R}$.  

\item Let $X$ be a set. If $\varphi$ is any $1$-$1$ self-dual map  from $V\times V\setminus \Delta_{V}$ to $(2\times 2)^{X}$  whose range is an antichain,  then  there is a map $\theta$ from $X$ onto a hereditary rigid  set  $\mathcal R$ of irreflexive binary relations on $V$ such that   the natural map $ \tilde{\theta}: (2\times 2) ^{\mathcal R} \rightarrow (2\times 2)^{X}$ defined by $\tilde{\theta}(\psi):= \psi\circ \theta$ satisfies  $\tilde \theta \circ  \tilde p_{\mathcal R}=\varphi.$
\end{enumerate}
\end{lemma}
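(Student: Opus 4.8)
The plan is to reduce both parts to Proposition~\ref{prop:hereditaryrigid} — specifically to its injective reformulation (iii) — by translating, for a two-element domain, ``being a homomorphism'' into a comparability relation between double profiles.

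\emph{Part (1).} Fix a two-element subset $A=\{x,y\}$ of $V$ and an injective map $f\colon A\to V$ with $f(x)=x'$, $f(y)=y'$ (so $x'\neq y'$). The key observation is that, because each $\rho\in\mathcal R$ is reflexive or irreflexive, the diagonal pairs impose no constraint: a reflexive $\rho$ contains every $(v,v)$ together with its image, while an irreflexive $\rho$ contains no $(v,v)$ at all. Hence $f$ preserves $\mathcal R$ precisely when, for every $\rho$, one has $\rho(x,y)\leq\rho(x',y')$ and $\rho(y,x)\leq\rho(y',x')$, that is, when $\tilde p_{\mathcal R}(x,y)\leq\tilde p_{\mathcal R}(x',y')$ in $(2\times 2)^{\mathcal R}$. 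By Proposition~\ref{prop:hereditaryrigid}(iii), $\mathcal R$ is hereditarily rigid iff every such injective homomorphism is a subfunction of the identity, i.e.\ iff, for all ordered pairs of distinct elements, $\tilde p_{\mathcal R}(x,y)\leq\tilde p_{\mathcal R}(x',y')$ forces $(x',y')=(x,y)$. I would then check the (routine) equivalence of this condition with ``$\tilde p_{\mathcal R}$ is $1$-$1$ and its range is an antichain'': injectivity rules out equal images of distinct pairs, the antichain property rules out a strict comparison of distinct images, and conversely both follow from the displayed condition.

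\emph{Part (2).} I would then build $\mathcal R$ explicitly, guided by the dictionary of Part (1). Given $\varphi$, for each $\xi\in X$ define an irreflexive relation $\rho_\xi$ on $V$ by letting $\rho_\xi(x,y)$ be the first coordinate of $\varphi(x,y)(\xi)$ for every $(x,y)\in V\times V\setminus\Delta_V$, and set $\theta(\xi):=\rho_\xi$ and $\mathcal R:=\theta(X)$, so that $\theta$ maps $X$ onto $\mathcal R$. Self-duality of $\varphi$ is exactly what makes this coherent with the double profile: it gives $\varphi(y,x)(\xi)=\overline{\varphi(x,y)(\xi)}$, so the second coordinate of $\varphi(x,y)(\xi)$ equals $\rho_\xi(y,x)$, whence $\tilde p_{\mathcal R}(x,y)(\rho_\xi)=(\rho_\xi(x,y),\rho_\xi(y,x))=\varphi(x,y)(\xi)$. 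Unwinding $\tilde\theta(\psi):=\psi\circ\theta$ then yields $\tilde\theta\circ\tilde p_{\mathcal R}=\varphi$ directly.

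Finally I would verify that $\mathcal R$ is hereditarily rigid by transporting the hypotheses on $\varphi$ back through Part (1). Since $\varphi=\tilde\theta\circ\tilde p_{\mathcal R}$ is $1$-$1$, so is $\tilde p_{\mathcal R}$. Moreover $\tilde\theta$ is order-preserving, precomposition by $\theta$ being monotone coordinatewise; hence any comparison $\tilde p_{\mathcal R}(x,y)\leq\tilde p_{\mathcal R}(x',y')$ would give $\varphi(x,y)\leq\varphi(x',y')$, and as $\varphi$ is injective with antichain range this forces $(x,y)=(x',y')$. Thus the range of $\tilde p_{\mathcal R}$ is an antichain, and Part (1) delivers hereditary rigidity. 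The one genuinely delicate point, which I would write out most carefully, is the reflexive-or-irreflexive bookkeeping in Part (1): it is precisely what guarantees that the diagonal contributes no homomorphism constraints, so that preservation is governed solely by the off-diagonal coordinates recorded in the double profile. The remaining steps are straightforward verifications.
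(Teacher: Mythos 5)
Your proposal is correct and follows essentially the same route as the paper: part (1) is exactly the paper's observation that an injective map on a two-element domain is a homomorphism iff the double profiles are comparable, combined with Proposition~\ref{prop:hereditaryrigid}(iii), and part (2) uses the identical construction $\theta(\xi)(x,y):=p_1(\varphi(x,y)(\xi))$ with $\mathcal R$ the range of $\theta$. The only difference is that you spell out the verifications (the role of self-duality, the monotonicity of $\tilde\theta$, and the transfer of the injectivity/antichain hypotheses back to $\tilde p_{\mathcal R}$) which the paper leaves implicit.
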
 
\begin{proof}
1) Observe that if $(x,y)$ and $(x',y')$ are in $V\times V\setminus \Delta_V$, the map transforming $x$ to $x'$ and $y$ to $y'$ is a  homomorphism of $\mathcal R_{\restriction \{x,y\}}$ to $\mathcal R$ if and only if  $(\rho(x, y), \rho(y,x))_{\rho\in \mathcal R}\leq (\rho(x', y'), \rho(y',x'))_{\rho\in \mathcal R}$. Hence,  the above condition on $\tilde p_{\mathcal R}$ amounts to $(iii)$ of Proposition \ref{prop:hereditaryrigid}. 

2) Let $p_1: 2\times 2 \rightarrow 2$ be the first projection,  let $\theta: X\rightarrow 2^{V\times V \setminus \Delta_V}$ defined by $\theta (u)(x,y):= p_1(\varphi (x,y)(u))$ for $u\in X$, $(x,y)\in V\times V \setminus \Delta_V$  and let $\mathcal R$ be the range of $\theta$.  
\end{proof}

\begin{theorem}\label{thm:hered-rigidity-bound}  
There is a hereditarily rigid set $\mathcal R$ of $\kappa$ binary relations, each one reflexive or irreflexive on a set $V$ of cardinality $\mu$ if and only if 
$\mu(\mu-1) \leq {\vert {2\kappa \choose {\kappa}} \vert }$ if $\kappa$ is finite  and 
$\mu \leq 2^{\kappa}$ otherwise.
\end{theorem}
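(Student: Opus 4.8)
The plan is to reduce the statement, via Lemma~\ref{lem:cns}, to a purely order\mbox{-}theoretic existence problem in $(2\times 2)^{\kappa}$ and then feed it to Sperner's theorem. By Lemma~\ref{lem:cns}(1) a set $\mathcal R$ of $\kappa$ binary relations, each reflexive or irreflexive, is hereditarily rigid exactly when its double profile $\tilde p_{\mathcal R}\colon V\times V\setminus\Delta_V\to(2\times 2)^{\mathcal R}$ is $1$-$1$ and has antichain range; conversely Lemma~\ref{lem:cns}(2) manufactures such an $\mathcal R$ (of irreflexive relations, and of cardinality at most $\kappa$) from any $1$-$1$ self\mbox{-}dual antichain\mbox{-}valued map. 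Since $\tilde p_{\mathcal R}$ is automatically self\mbox{-}dual, $\tilde p_{\mathcal R}(y,x)=\overline{\tilde p_{\mathcal R}(x,y)}$, the existence of a hereditarily rigid $\mathcal R$ with $|\mathcal R|=\kappa$ on $V$, $|V|=\mu$, is \emph{equivalent} to the existence of a subset $S\subseteq(2\times 2)^{\kappa}$ that is an antichain, is closed under the involution $-$, carries no fixed point (because $1$-$1$-ness forbids $\tilde p_{\mathcal R}(x,y)=\overline{\tilde p_{\mathcal R}(x,y)}$), and has $|S|=\mu(\mu-1)$. First I would record the easy matching lemma: the flip $(x,y)\mapsto(y,x)$ acts on $V\times V\setminus\Delta_V$ without fixed points, so given such an $S$ one bijects its $\mu(\mu-1)/2$ two\mbox{-}element $-$-orbits with the $\mu(\mu-1)/2$ flip\mbox{-}orbits and sends $(x,y)\mapsto z$, $(y,x)\mapsto\overline z$; this yields the required self\mbox{-}dual $1$-$1$ map, and a pad by a few distinct dummy irreflexive relations restores the exact count $\kappa$ (adding relations only shrinks the set of preservers). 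Thus everything reduces to: \emph{how large can an involution\mbox{-}closed, fixed\mbox{-}point\mbox{-}free antichain in $(2\times 2)^{\kappa}$ be?}

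For \textbf{necessity} I would use the isomorphism recalled in the paper, $(2\times 2)^{\kappa}\cong 2^{\kappa\times 2}\cong\powerset(\kappa\times 2)$, under which $-$ swaps the two bits of each coordinate and therefore preserves weight, hence the Sperner levels. Any antichain has size at most $\binom{2\kappa}{\kappa}$ by Sperner's theorem when $\kappa$ is finite, and at most $|(2\times 2)^{\kappa}|=2^{\kappa}$ when $\kappa$ is infinite; since $|S|=\mu(\mu-1)$ this gives $\mu(\mu-1)\le\binom{2\kappa}{\kappa}$ (finite $\kappa$) and $\mu(\mu-1)\le 2^{\kappa}$ (infinite $\kappa$). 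In the infinite case the latter is equivalent to $\mu\le 2^{\kappa}$: if $\mu$ is infinite then $\mu(\mu-1)=\mu$, while if $\mu$ is finite then $\mu\le 2^{\kappa}$ is automatic.

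For \textbf{sufficiency} I would exhibit $S$ explicitly. When $\kappa$ is infinite and $\mu\le 2^{\kappa}$, I would bypass the order by landing in the ``tournament'' sub\mbox{-}poset $\{(0,1),(1,0)\}^{\kappa}\cong 2^{\kappa}$: there $-$ is coordinatewise complementation, which is fixed\mbox{-}point free, and any two distinct elements are incomparable (the two admissible values of each coordinate are incomparable in $2\times 2$), so \emph{every} $-$-closed subset is automatically a fixed\mbox{-}point\mbox{-}free antichain; as $2^{\kappa}\ge\mu\ge\mu(\mu-1)$ in cardinal arithmetic, a union of the right number of $2$-orbits supplies $S$. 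When $\kappa$ is finite I would take $S$ inside the middle layer $M$ of $\powerset(\kappa\times 2)$, an involution\mbox{-}closed antichain of size $\binom{2\kappa}{\kappa}$, then delete its fixed points; since $\mu(\mu-1)$ is even I can then keep exactly $\mu(\mu-1)/2$ of the surviving two\mbox{-}element orbits.

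The hard part — and where the even\mbox{-}$\kappa$ arithmetic is delicate — is precisely the fixed points of $-$ inside $M$. A fixed point has $\alpha_i=\beta_i$ in every coordinate, hence even weight, so when $\kappa$ is \emph{odd} the layer of weight $\kappa$ contains none: $M$ itself is a fixed\mbox{-}point\mbox{-}free antichain of the full size $\binom{2\kappa}{\kappa}$ and the bound is met with no loss. When $\kappa$ is \emph{even}, however, $M$ carries exactly $\binom{\kappa}{\kappa/2}$ fixed points, and the uniqueness clause of Sperner's theorem forbids realizing the extremal size $\binom{2\kappa}{\kappa}$ by any fixed\mbox{-}point\mbox{-}free antichain; the real obstacle is therefore to show that after removing these self\mbox{-}dual points one still accommodates $\mu(\mu-1)$ elements, i.e.\ that the admissible $\mu$ never push $\mu(\mu-1)$ into the narrow top band vacated by the deleted fixed points. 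I expect this to be the crux and the step demanding the most care: it calls either for a quantitative check that $\mu(\mu-1)\le\binom{2\kappa}{\kappa}-\binom{\kappa}{\kappa/2}$ in the relevant range, or for a dedicated near\mbox{-}extremal construction spanning the layers $\kappa\pm 1$ around the removed points to recover the last orbits — the infinite case and the odd\mbox{-}$\kappa$ case being comparatively routine once the reduction above is in place.
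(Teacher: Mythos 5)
Your reduction of the theorem, via Lemma \ref{lem:cns}, to the existence of a $-$-closed, fixed-point-free antichain of size $\mu(\mu-1)$ in $(2\times 2)^{\kappa}$ is correct, and it is essentially the route the paper itself takes (a tournament on $V$ mapped into the middle level, paired off by a fixed-point-free involution). Your handling of the two easy regimes is in fact cleaner than the paper's: for infinite $\kappa$ the paper's appeal to ``the middle level of $(2\times 2)^{X}$'' is meaningless (the $\kappa$-element subsets of a $2\kappa$-element set do not form an antichain when $\kappa$ is infinite), whereas your sub-poset $\{(0,1),(1,0)\}^{\kappa}$, on which $-$ acts as fixed-point-free complementation and any two distinct elements are incomparable, settles that case; and for odd finite $\kappa$ your ``middle level minus fixed points'' argument loses nothing and closes the proof.

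The even-$\kappa$ step that you flagged as the crux is, however, not a step demanding more care: it cannot be closed, because the statement is \emph{false} there, and your own necessity analysis proves it. Take $\kappa=2$, $\mu=3$, so $\mu(\mu-1)=6=\binom{4}{2}$ and the theorem asserts that a hereditarily rigid pair of relations exists on a $3$-element set. By Lemma \ref{lem:cns}(1) the range of $\tilde p_{\mathcal R}$ would be an antichain of size $6$ in $(2\times 2)^{2}\cong 2^{4}$; by the uniqueness clause of Sperner's theorem that range must be the entire middle level, which contains the self-dual element $\bigl((1,1),(0,0)\bigr)$. Any ordered pair $(x,y)$ with this double profile satisfies $\tilde p_{\mathcal R}(y,x)=\overline{\tilde p_{\mathcal R}(x,y)}=\tilde p_{\mathcal R}(x,y)$, contradicting injectivity. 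So no such $\mathcal R$ exists, and Theorem \ref{thm:hered-rigidity-bound} fails at $(\kappa,\mu)=(2,3)$; moreover, for $\kappa=2$ a symmetric chain decomposition of $2^{4}$ whose two singleton chains are the two fixed points $\{a_1,b_1\},\{a_2,b_2\}$ shows that $4=\binom{4}{2}-\binom{2}{1}$ is the true maximum, so the corrected finite bound should be the maximal size of a fixed-point-free $-$-closed antichain ($\binom{2\kappa}{\kappa}$ for odd $\kappa$, and $\binom{2\kappa}{\kappa}-\binom{\kappa}{\kappa/2}$ in the cases one can check for even $\kappa$), not $\binom{2\kappa}{\kappa}$ outright. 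The paper's proof breaks exactly where you suspected: its involution $\sigma$ (``exchange the $0$ and $1$'', i.e.\ complementation) is indeed fixed-point-free on the middle level, but the resulting $\varphi$ is \emph{not} self-dual in the sense required by Lemma \ref{lem:cns}(2) --- self-duality forces $\sigma$ to be the bit-swap involution $-$, and for even $\kappa$ that involution has $\binom{\kappa}{\kappa/2}$ fixed points on the middle level, so $\varphi$ cannot be simultaneously self-dual, $1$-$1$, and of extremal size. In short: your proposal is a correct and more careful execution of the paper's approach wherever the theorem is true, and the gap you could not fill is a genuine error in the paper rather than in your argument.
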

\begin{proof}
According to $1)$ of Lemma \ref{lem:cns} and Sperner's Theorem the first inequality is satisfied. If $\kappa$ is infinite, we get  the second. For the converse, we define a $1$-$1$ self dual map $\varphi$ from $V\times V\setminus \Delta_{V}$ to $(2\times 2)^{X}$, where $\vert X\vert = \kappa$,  whose range is an antichain and  apply 2) of Lemma \ref{lem:cns}. For that,  let $\ell$ be a tournament on $V$. Due to Sperner's Theorem, we may choose a $1$-$1$ map $\varphi'$ from  $\ell$ to the middle level of $(2\times 2)^{X}$. Next, select an involution  $\sigma$ on this middle level with no fixed point (e.g. associate to each $0$-$1$-sequence the sequence obtained by exchanging the $0$ and $1$). Then, set $\varphi(x,y):= \varphi'(x,y)$ for $(x,y)\in \ell$  and $\varphi(x,y):= \sigma(\varphi'(y,x))$ otherwise.  This map is self-dual. \end{proof}
We examine the case of tournaments and linear orders in the next two sections.
\section{Separation and Hereditary rigidity of tournaments} 

Let $V$ be a set.  A \emph{tournament} on $V$ is an irreflexive  binary relation $\tau$ on $V$ such that for every ordered pair $(x,y)$ either $(x,y)\in \tau$ or $(y,x)\in \tau$, but not both. 

Let $\Tour(V)$ be the set of tournaments on $V$.  We say that a tournament  $\tau$  \emph{separates} two distinct ordered pairs $(x,y), (x',y')\in V\times V\setminus \Delta_{V}$ if $\tau (x,y)\not = \tau(x',y')$. 

Despite that fact that linear orders are reflexive, and tournaments are not, we may view linear orders as tournaments and apply to them what follows. 
 
 \begin{lemma}\label{lem:linorderseparate}Let $V$ be a set; then two distinct pairs $(x,y), (x',y')\in V\times V\setminus \Delta_{V}$ are always separated by some linear order.
 \end{lemma}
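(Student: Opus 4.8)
The goal is, for distinct $(x,y),(x',y')\in V\times V\setminus\Delta_{V}$, to exhibit a linear order $\ell$ on $V$ with $\ell(x,y)\neq\ell(x',y')$, where (viewing $\ell$ as a tournament) $\ell(a,b)=1$ iff $a<_\ell b$. I would aim for the sharper target $\ell(x,y)=1$ and $\ell(x',y')=0$, that is, an order in which $x<_\ell y$ while $y'<_\ell x'$.

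The plan is to realise these two requirements as order constraints and then extend. Consider the relation $S:=\{(x,y),\,(y',x')\}$ on $V$, read as the two strict inequalities $x<y$ and $y'<x'$, and let $\overline{S}$ be its transitive closure. I would first check that $\overline{S}$ is a strict partial order, i.e.\ that it is irreflexive. Since $x\neq y$ and $x'\neq y'$, the relation $S$ has no loops, and as it consists of only two edges the only ways $\overline{S}$ could acquire a loop are: a composition $x\to y\to x'$ (which needs $y=y'$, producing $x\to x'$, a loop only if $x=x'$); the symmetric composition $y'\to x'\to y$ (which needs $x'=x$, producing $y'\to y$, a loop only if $y'=y$); or a $2$-cycle, for which $y'\to x'$ must be the reverse $y\to x$ of $x\to y$. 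In each case the degeneration forces $(x',y')=(x,y)$, which is excluded by the hypothesis that the two pairs are distinct. Hence $\overline{S}$ is irreflexive and transitive, a strict partial order.

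Finally I would extend $\overline{S}$ to a strict linear order $\ell$ on all of $V$ by the order-extension (Szpilrajn) theorem. Any such $\ell$ satisfies $x<_\ell y$ and $y'<_\ell x'$, so $\ell(x,y)=1\neq 0=\ell(x',y')$, and $\ell$ separates the two pairs.

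The only real work is the acyclicity bookkeeping, and it is genuinely mild: one is simply confirming that the sole obstruction is the forbidden equality $(x',y')=(x,y)$. If one prefers to avoid the general extension theorem, the same conclusion follows from a transparent split on $|\{x,y,x',y'\}|\in\{2,3,4\}$: when this set is $\{x,y\}$ one necessarily has $(x',y')=(y,x)$ and \emph{every} linear order separates the pairs, while otherwise it suffices to order the at most four involved points consistently with $x<y$ and $y'<x'$ and place the remaining elements arbitrarily afterwards. The one set-theoretic caveat, relevant only when $V$ is infinite, is that extending to a linear order on all of $V$ relies on the axiom of choice.
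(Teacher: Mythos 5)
Your proof is correct and follows essentially the same route as the paper: impose the constraints $x<y$ and $y'<x'$, observe that the transitive closure of $\{(x,y),(y',x')\}$ is a (strict) partial order precisely because the two pairs are distinct and off-diagonal, and then take any linear extension. The paper merely treats the degenerate case $(x',y')=(y,x)$ separately (any linear order containing $(x,y)$ then works), whereas you absorb it into the general argument; this is an inessential difference.
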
 
 \begin{proof}Indeed, if $(x,y)= (y',x')$ any linear order containing $(x,y)$ will do. If not then the reflexive transitive closure of $\{(x,y), (y',x')\}$ is an order. Any linear extension of that order will do.\end{proof}

\begin{lemma} \label{lem:tour}Let $\mathcal R$ be a family of tournaments on a set $V$. The following properties are equivalent:
\begin{enumerate}[{(i)}]\item For all distinct ordered pairs $(x,y), (x',y')\in V\times V\setminus \Delta_{V}$ there is always some member of $\rho \in \mathcal R$ that separates them;
\item The family  $\mathcal R$ is hereditarily rigid. 
\end{enumerate}
\end{lemma}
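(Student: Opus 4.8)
The plan is to deduce the equivalence directly from part 1) of Lemma \ref{lem:cns}, exploiting the special structure of the double profile when the relations are tournaments. Recall that Lemma \ref{lem:cns} characterizes hereditary rigidity of a family of reflexive-or-irreflexive binary relations by two conditions on the double profile $\tilde p_{\mathcal R}$: it must be $1$-$1$, and its range must be an antichain of $(2\times 2)^{\mathcal R}$. The key observation is that for a tournament $\rho$ and any $(x,y)\in V\times V\setminus\Delta_V$ we have $\rho(y,x)=1-\rho(x,y)$, so each coordinate of $\tilde p_{\mathcal R}(x,y)$ is either $(1,0)$ or $(0,1)$. Thus the double profile is entirely determined by the ordinary profile $p_{\mathcal R}$, and $\tilde p_{\mathcal R}$ is $1$-$1$ precisely when $p_{\mathcal R}$ is $1$-$1$.

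Next I would record that $p_{\mathcal R}$ being $1$-$1$ is exactly a reformulation of condition (i): two distinct ordered pairs $(x,y)$ and $(x',y')$ receive distinct profiles if and only if some $\rho\in\mathcal R$ satisfies $\rho(x,y)\neq\rho(x',y')$, that is, some member of $\mathcal R$ separates them.

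It then remains to check that, for tournaments, the antichain condition comes for free once the double profile is $1$-$1$. Suppose $\tilde p_{\mathcal R}(x,y)\neq\tilde p_{\mathcal R}(x',y')$; then they differ in some coordinate $\rho_0$, where one value is $(1,0)$ and the other is $(0,1)$. Since these two elements are incomparable in $2\times 2$, neither sequence can dominate the other in the product order, so any two distinct elements of the range are incomparable and the range is automatically an antichain. Consequently both requirements of Lemma \ref{lem:cns}(1) collapse to the single condition that $p_{\mathcal R}$ be $1$-$1$, which is precisely (i), yielding (i) $\iff$ (ii).

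The only step requiring care — and the one I would flag as the main point rather than a genuine obstacle — is this verification that the antichain condition is automatic for tournaments. It is exactly this phenomenon that makes tournaments behave more simply than arbitrary reflexive-or-irreflexive relations, for which the antichain condition is a real extra constraint (and for which Sperner's Theorem enters in Theorem \ref{thm:hered-rigidity-bound}); here, because every coordinate of the double profile lands in the incomparable pair $\{(0,1),(1,0)\}$, the product order never lets one profile dominate another.
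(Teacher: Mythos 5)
Your proof is correct, and it takes a genuinely different route from the paper's. The paper proves Lemma \ref{lem:tour} directly, by hand: for $(i)\Rightarrow(ii)$ it takes a non-constant partial homomorphism $f$ on a two-element set $\{x,y\}$ with images $x',y'$, and uses the tournament identity $\ell(y,x)=1-\ell(x,y)$ together with preservation to derive a contradiction with separation; for $(ii)\Rightarrow(i)$ it observes that the map $x\mapsto x'$, $y\mapsto y'$ cannot be a homomorphism and unravels what the failure of preservation of some $\ell\in\mathcal R$ says coordinate by coordinate. Your argument packages exactly the same tournament complementarity into a single order-theoretic observation: the double profile $\tilde p_{\mathcal R}$ takes its values in $\{(0,1),(1,0)\}^{\mathcal R}$, where any two distinct elements are incomparable, so the antichain requirement in Lemma \ref{lem:cns}(1) is vacuous and hereditary rigidity collapses to injectivity of the simple profile $p_{\mathcal R}$, which is precisely condition (i). Each step you give (equivalence of injectivity of $\tilde p_{\mathcal R}$ and of $p_{\mathcal R}$ via $\rho(y,x)=1-\rho(x,y)$, the reformulation of injectivity of $p_{\mathcal R}$ as separation, and the automatic incomparability) is sound, and tournaments are irreflexive so Lemma \ref{lem:cns} indeed applies. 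What each approach buys: yours is shorter once Lemma \ref{lem:cns} is in hand, and it isolates structurally why tournaments behave more simply than arbitrary reflexive-or-irreflexive relations --- the Sperner-type antichain constraint that drives Theorem \ref{thm:hered-rigidity-bound} costs nothing here; the paper's proof is self-contained and elementary, resting only on the reduction to two-element domains (Proposition \ref{prop:hereditaryrigid}), and so can be read without the profile machinery. Both proofs ultimately encode the same fact, but yours makes the mechanism explicit rather than rediscovering it pointwise.
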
 

\begin{proof}
$(i)\Rightarrow (ii)$. Let $U$ be a two-element subset of $V$ and $f$ be a partial  homomorphism of $\mathcal R$ defined on $U$. Supposing $f$ non constant, we prove that $f$ is the identity. Let $x, y$ be the two elements of $U$, let $x':= f(x)$, $y':= f(y)$.  If the ordered pairs $(x,y)$ and $(x',y')$ are distinct, they are separated by some $\ell\in \mathcal R$, i.e., verifying $\ell(x,y)\not = \ell(x',y')$. Since $f$ is an endomorphism, if $\ell(x,y)= 1$ then $\ell (x',y')=1$. Thus $\ell(x,y)=0$. Since $\ell$ is a tournament, $\ell(y,x)=1$ and since $f$ is an endomorphism, $\ell(y',x')=1$, but then $\ell(x,y)=\ell(x',y')=0$, contradicting the fact that $\ell$ separates $(x,y)$ and $(x',y')$.\\ 
\noindent $(ii)\Rightarrow (i)$. Let $(x,y), (x',y')$ be two distinct irreflexive ordered pairs. The $1$-$1$ map $f$ defined on $U:= \{x, y\}$ such that $f(x):= x', f(y):= y'$ is not the identity, hence it cannot be a  homomorphism; so there is some $\ell\in \mathcal R$ which is not preserved by $f$, meaning that there is some $(u,v)\in \ell$ such that $(f(u),f(v))\not \in \ell$. If $(u,v)= (x,y)$ then $1= \ell(x,y)\not = \ell(x',y')=0$ while if $(u,v)= (y,x)$, then since $\ell$ is a tournament, $0= \ell(x,y)\not = \ell(x',y')=1$, proving that $\ell$ separates these  two ordered pairs. 
\end{proof}

\begin{definitions}\label{def:h}
Let $V$ be a set,  $\kappa$ be its cardinality (possibly infinite) and  $\mathcal R$ be a set of tournaments on $V$ satisfying one of the equivalent conditions of Lemma \ref{lem:tour}.  We define $h_{\mathcal R}(\kappa)$ as the least cardinal $\mu$ such that there is some subset $X$ of $\mathcal R$ of cardinality $\mu$ such that all distinct ordered pairs $(x,y), (x',y')\in V\times V\setminus \Delta_{V}$ are always separated by some member of $X$.  Let $X\subseteq \mathcal R$. 
Fix $\ell\in  X$. The \emph{profile of} $X$ \emph{with respect to} $\ell$ is the family $p_{\ell} (X):=\{p_X(x,y): (x,y)\in \ell\}$. 
 This profile is \emph{minimal} if $p _{X}(x,y)\not = p_{X}(x',y')$ for any two distinct ordered pairs $(x,y), (x',y')\in \ell$.
\end{definitions}

\begin{lemma} Let $\ell\in X$  and $p_{\ell} (X)$ be the  profile of $X$ with respect to $\ell$. Then
  $p_{\ell} (X)$ is minimal if and only if  all distinct ordered pairs $(x,y), (x',y')\in V\times V\setminus \Delta_{V}$ are always separated by some member of $X$.
\end{lemma}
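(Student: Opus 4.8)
The plan is to reformulate both sides of the equivalence in terms of the profile map $p_X\colon V\times V\setminus\Delta_V\to 2^X$, $(x,y)\mapsto(\rho(x,y))_{\rho\in X}$. By the definition of separation, a member of $X$ separates two distinct ordered pairs exactly when their $p_X$-values differ in the corresponding coordinate; hence the right-hand condition — that every two distinct ordered pairs are separated by some member of $X$ — says precisely that $p_X$ is injective on all of $V\times V\setminus\Delta_V$, while minimality of $p_\ell(X)$ says exactly that $p_X$ is injective on $\ell$. So the real task is to show that injectivity of $p_X$ on the single tournament $\ell$ already forces injectivity of $p_X$ on the whole of $V\times V\setminus\Delta_V$.

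Two structural facts will drive this. First, since every $\rho\in X$ is a tournament, $\rho(y,x)=1-\rho(x,y)$, so $p_X(y,x)$ is obtained from $p_X(x,y)$ by exchanging $0$ and $1$; as this complementation is a bijection of $2^X$, $p_X$ is injective on $\ell$ if and only if it is injective on the reverse $\{(y,x):(x,y)\in\ell\}$, which (again because $\ell$ is a tournament) equals $(V\times V\setminus\Delta_V)\setminus\ell$. Second — and this is the point that dispatches the mixed comparison — because $\ell$ is itself a member of $X$, the $\ell$-coordinate of $p_X(x,y)$ is just $\ell(x,y)$, equal to $1$ when $(x,y)\in\ell$ and to $0$ otherwise.

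One direction is immediate: if $p_X$ is injective on all of $V\times V\setminus\Delta_V$, it is in particular injective on $\ell$, so $p_\ell(X)$ is minimal. For the other direction, assume $p_\ell(X)$ is minimal and take distinct ordered pairs $(x,y)\neq(x',y')$; I distinguish three cases according to membership in $\ell$. If both lie in $\ell$, minimality gives $p_X(x,y)\neq p_X(x',y')$ directly. If neither lies in $\ell$, then $(y,x)$ and $(y',x')$ are distinct pairs of $\ell$, so minimality gives $p_X(y,x)\neq p_X(y',x')$, and applying the complementation bijection yields $p_X(x,y)\neq p_X(x',y')$. If exactly one of the two pairs lies in $\ell$, their $\ell$-coordinates are $1$ and $0$, so their profiles already differ in that coordinate. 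In every case some member of $X$ separates the two pairs, which is the assertion.

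The only place that could appear problematic is the mixed case, where a pair in $\ell$ and a distinct pair outside $\ell$ must be separated even though minimality makes no statement about such a comparison. The resolution is exactly the second observation above: since $\ell\in X$, the profile carries a distinguished coordinate recording membership in $\ell$, and that coordinate alone separates any $\ell$-pair from any non-$\ell$-pair (note that the reverse pair $(y,x)$ of a pair $(x,y)\in\ell$ always falls into this mixed case, so it is handled here rather than requiring a self-complementary profile argument). Thus the substance of the proof is merely to check that the three cases are exhaustive and that the tournament hypothesis, together with $\ell\in X$, supplies the complementation symmetry and the membership coordinate that close each of them.
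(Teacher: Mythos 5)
Your proof is correct and follows essentially the same route as the paper's: the same three-way case analysis (both pairs in $\ell$, both outside $\ell$, mixed), with minimality handling the first case, tournament complementation transferring the second to the first, and the $\ell$-coordinate (the paper's ``if $\ell$ separates these pairs, we are done'') handling the mixed case. Your reformulation via injectivity of $p_X$ and the explicit note that reverses of $\ell$-pairs land in the mixed case is a slightly more systematic write-up of the identical argument.
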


\begin{proof} Suppose that $p_{\ell} (X)$ is minimal. Let $(x, y), (x',y') \in V\times V \setminus \Delta_{V}$ be two distinct ordered pairs. If  $\ell$ separates these pairs, we are done. Otherwise $\ell(x,y)= \ell(x',y')$. If the common value is $1$, then  since $p_{X}(x,y)\not = p _X(x',y')$ there is some $\ell'\in X\setminus \{\ell\}$ such that $\ell'(x,y) \not =\ell'(x',y')$. If the common value is $0$, then $(y,x), (y',x')\in \ell$ and the previous reasoning  yields the same conclusion.  Suppose that the separation property holds. Then two distinct ordered pairs $(x,y), (x',y')\in \ell$ are separated by some member of $X$, thus $p _X(x,y)\not = p _X(x',y')$. 
\end{proof} 

An immediate corollary is the following. 
\begin{corollary} If the profile of $X$ with respect to $\ell\in X$ is minimal, then its profile with respect to any other  $\ell'\in X$ is minimal too. 
\end{corollary}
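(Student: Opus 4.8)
The plan is to deduce this statement immediately from the preceding lemma, whose characterization of minimality is visibly independent of the distinguished member $\ell$. The whole point is that minimality of $p_{\ell}(X)$ was shown to be equivalent to a condition on $X$ that never refers to $\ell$, so it cannot depend on which member we single out.

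First I would recall precisely what the preceding lemma supplies: for a fixed $\ell \in X$, the profile $p_{\ell}(X)$ is minimal if and only if every two distinct ordered pairs $(x,y), (x',y') \in V\times V\setminus \Delta_{V}$ are separated by some member of $X$. The crucial observation I would stress is that this right-hand \emph{separation property} quantifies over \emph{all} pairs of $V\times V\setminus \Delta_{V}$ and mentions only the family $X$; the order $\ell$ has disappeared entirely. Hence it is an intrinsic property of $X$ alone.

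Given this, the argument is to apply the lemma twice. Assume $p_{\ell}(X)$ is minimal for some particular $\ell \in X$. By the "only if" direction of the lemma, the separation property holds for $X$. Now let $\ell'$ be any other member of $X$. Since the separation property is $\ell$-free, it is equally available with $\ell'$ in the role of the distinguished order, and the "if" direction of the lemma then yields that $p_{\ell'}(X)$ is minimal.

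I do not expect any real obstacle: all the substance was already absorbed into the preceding lemma, and what remains is merely to notice that its equivalent condition does not involve $\ell$, so minimality with respect to one member transfers to every member. The corollary is therefore immediate.
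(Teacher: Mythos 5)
Your proposal is correct and is exactly the argument the paper intends: the paper states this as an ``immediate corollary'' of the preceding lemma, precisely because minimality of $p_{\ell}(X)$ was characterized there by the separation property of $X$, which makes no reference to $\ell$. Applying the lemma in both directions, as you do, is the whole proof.
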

Another straightforward consequence is the following result. 

\begin{proposition}\label{prop:minimality}
Let $V$ be a set of cardinality $\kappa$,  $\mathcal R$ be a set of tournaments on $V$ satisfying one of the equivalent conditions of Lemma \ref{lem:tour}. Then $h_{\mathcal R} (\kappa)$ is the minimum of the cardinality of a subset $X$ of $\mathcal R$ such that its profile with respect to some tournament $\ell\in X$ is minimal.  \end{proposition}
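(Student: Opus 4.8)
The plan is to derive the proposition directly from the Lemma immediately preceding it, which is precisely the bridge between the separation property used to define $h_{\mathcal R}(\kappa)$ and the minimality of a profile. The crucial observation is that the separation condition appearing in the definition of $h_{\mathcal R}(\kappa)$ --- namely, that every two distinct ordered pairs of $V\times V\setminus\Delta_{V}$ are separated by some member of $X$ --- makes no reference to a distinguished tournament $\ell\in X$, whereas the notion of a minimal profile is stated relative to such an $\ell$. The Lemma asserts that for each fixed $\ell\in X$ these two conditions are equivalent, and this is exactly what is needed to match up the two families of subsets whose least cardinalities we wish to compare.

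First I would fix a subset $X\subseteq\mathcal R$ and isolate two properties of $X$: the property $(\mathrm{P}_1)$ that every two distinct ordered pairs of $V\times V\setminus\Delta_{V}$ are separated by some member of $X$; and the property $(\mathrm{P}_2)$ that there exists $\ell\in X$ for which the profile $p_{\ell}(X)$ is minimal. I claim these are equivalent. Indeed, by the Lemma, for each individual $\ell\in X$ the minimality of $p_{\ell}(X)$ is equivalent to $(\mathrm{P}_1)$. Since $(\mathrm{P}_1)$ does not mention $\ell$, if $(\mathrm{P}_2)$ holds then $(\mathrm{P}_1)$ holds; conversely, if $(\mathrm{P}_1)$ holds and $X\neq\emptyset$, then $p_{\ell}(X)$ is minimal for every $\ell\in X$, so in particular $(\mathrm{P}_2)$ holds. (The Corollary records the sharper fact that minimality for one $\ell$ propagates to all of $X$, but only the existential form $(\mathrm{P}_2)$ is needed here.) Thus $(\mathrm{P}_1)$ and $(\mathrm{P}_2)$ single out exactly the same collection of subsets of $\mathcal R$.

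With this equivalence in hand the conclusion is immediate: $h_{\mathcal R}(\kappa)$ is by definition the least cardinality of a subset $X\subseteq\mathcal R$ satisfying $(\mathrm{P}_1)$, while the right-hand side of the proposition is the least cardinality of a subset satisfying $(\mathrm{P}_2)$; as the two properties describe the same subsets, the two minima coincide. I do not expect any genuine obstacle --- all the substantive work has been done in the Lemma --- so the only point requiring care is the degenerate case that forces the caveat $X\neq\emptyset$ above. If $\kappa\leq 1$ then $V\times V\setminus\Delta_{V}$ is empty, $(\mathrm{P}_1)$ holds vacuously for $X=\emptyset$, and both quantities are $0$; otherwise any $X$ satisfying $(\mathrm{P}_1)$ is necessarily nonempty, so the quantifier ``for some $\ell\in X$'' in $(\mathrm{P}_2)$ is unproblematic and the identification of the two families, hence of their optimal sizes, is complete.
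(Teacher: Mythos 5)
Your proposal is correct and matches the paper's approach: the paper gives no separate proof at all, introducing Proposition~\ref{prop:minimality} as a ``straightforward consequence'' of the preceding Lemma (and its Corollary), and the equivalence you spell out between the separation property and the existence of some $\ell\in X$ with $p_{\ell}(X)$ minimal is exactly that intended argument. One quibble with your degenerate case: if $\kappa\leq 1$ the right-hand minimum cannot be $0$, since the condition ``for some $\ell\in X$'' forces $X\neq\emptyset$, so the two quantities actually differ there ($0$ versus $1$ or undefined); but this edge case lies outside the paper's concern and does not affect the substance of your argument.
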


\begin{lemma}\label{lem:minoration}Under the conditions of Definition \ref{def:h} the following inequality holds:
$h_{\mathcal R}(\kappa)\geq \log_2(\kappa \cdot (\kappa-1)).$
\end{lemma}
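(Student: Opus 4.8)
The plan is to read off the inequality from a direct counting argument, once one notices that the separation property is literally the injectivity of the profile map. Let $\mu := h_{\mathcal R}(\kappa)$ and, by definition of this cardinal, fix a subset $X \subseteq \mathcal R$ with $\vert X \vert = \mu$ such that any two distinct ordered pairs $(x,y), (x',y')\in V\times V\setminus \Delta_{V}$ are separated by some member of $X$. The whole argument will rest on translating this separation condition into an injectivity statement and then comparing cardinalities of domain and codomain.

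First I would observe that the profile map $p_X : V\times V\setminus \Delta_{V} \to 2^{X}$, sending each ordered pair $(x,y)$ to $(\rho(x,y))_{\rho\in X}$, is one-to-one. Indeed, $p_X(x,y) \neq p_X(x',y')$ means exactly that some $\rho \in X$ satisfies $\rho(x,y)\neq \rho(x',y')$, i.e. that $\rho$ separates the two pairs; so the separation property for $X$ is precisely the assertion that $p_X$ is injective on $V\times V\setminus \Delta_{V}$. (This is the same equivalence already exploited in Proposition \ref{prop:minimality}.)

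The existence of an injection from $V\times V\setminus \Delta_{V}$ into $2^{X}$ then forces $\vert V\times V\setminus \Delta_{V}\vert \leq \vert 2^{X}\vert$. Since $\vert V \vert = \kappa$, the domain has cardinality $\kappa\cdot(\kappa-1)$, while $\vert 2^{X}\vert = 2^{\mu}$; hence $\kappa\cdot(\kappa-1) \leq 2^{\mu}$. By the very definition of $\log_2$ recalled in the introduction---namely that $\log_2(\lambda)$ is the least cardinal (respectively, the least integer, in the finite case) $\nu$ with $\lambda \leq 2^{\nu}$---the inequality $\kappa\cdot(\kappa-1)\leq 2^{\mu}$ yields $\mu \geq \log_2(\kappa\cdot(\kappa-1))$. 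As $\mu = h_{\mathcal R}(\kappa)$, this is the desired bound.

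I do not expect any genuine obstacle here: the statement is a clean pigeonhole/counting inequality, valid uniformly for finite and infinite $\kappa$. The only points requiring mild care are that the same inequality $\kappa\cdot(\kappa-1)\leq 2^{\mu}$ governs both regimes, and that one must invoke the correct reading of $\log_2$ (the ceiling version when $\kappa$ is an integer, the least-cardinal version when $\kappa$ is infinite), so that the single line $\kappa\cdot(\kappa-1)\leq 2^{\mu}$ delivers $\mu \geq \log_2(\kappa\cdot(\kappa-1))$ in both cases.
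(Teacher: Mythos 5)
Your proof is correct and is essentially the paper's own argument: both rest on the observation that the separation property of $X$ is exactly injectivity of a profile map into a power of $2$, and then compare cardinalities to get $\kappa\cdot(\kappa-1)\leq 2^{\vert X\vert}$, hence $\log_2(\kappa\cdot(\kappa-1))\leq \vert X\vert$. The only (immaterial) difference is that the paper fixes $\ell\in X$ and injects just the pairs $(x,y)\in \ell$ into $2^{X\setminus\{\ell\}}$ via $p_{X\setminus\{\ell\}}$, obtaining $\frac{\kappa\cdot(\kappa-1)}{2}\leq 2^{\vert X\vert-1}$, which is equivalent to your inequality; you instead use the full profile map $p_X$ on all of $V\times V\setminus\Delta_V$, which is marginally more direct.
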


\begin{proof} Let $V$ be a set, $X$ be a subset of $\Tour(V)$. Suppose that $p_{\ell} (X)$ is minimal. Associate to  each $(x,y)\in \ell$ the profile of $X\setminus\{\ell\}$, that is $p _{X\setminus \{\ell\}}(x,y)$. This defines a map from $\ell$ into   $2^{X\setminus \{\ell\}}$. This map being $1$-$1$,  we have $\vert \ell \vert \leq 2^{\vert X\vert -1}$, that is $\frac{\kappa.(\kappa-1)}{2} \leq  2^{\vert X\vert-1}$. This amounts to $\kappa\cdot (\kappa-1)\leq 2^{\vert X\vert}$, that is $\log_2(\kappa\cdot(\kappa-1))\leq \vert X\vert$.
\end{proof}

We show in Theorem \ref{thm:exactvalue} below that the equality holds
for $\mathcal R= \Tour(V)$ but  for $\mathcal R:= \Lin (V)$ the exact value of $h_{\mathcal R}(n)$ for $n\in \N$ eludes us.

\begin{theorem}\label{thm:exactvalue}
$h_{\Tour} (\kappa)= \log_2(\kappa)$ if $\kappa$ is an infinite cardinal and $h_{\Tour} (\kappa)= \lceil \log_2(\kappa\cdot(\kappa-1)) \rceil$ if $\kappa$ is a non negative integer.
\end{theorem}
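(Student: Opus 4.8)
The plan is to sandwich $h_{\Tour}(\kappa)$ between the lower bound already supplied by Lemma~\ref{lem:minoration} and a matching construction that imitates the proof of Theorem~\ref{thm:hered-rigidity-bound}, but confines the self-dual map to the \emph{diagonal} antichain of $(2\times 2)^X$ rather than to its full Sperner middle level. First I would check that $\Tour(V)$ satisfies the hypotheses of Definition~\ref{def:h}: by Lemma~\ref{lem:linorderseparate} any two distinct ordered pairs are separated by a linear order, hence (viewing its strict part as a tournament) by a tournament, so $\Tour(V)$ meets condition (i) of Lemma~\ref{lem:tour}. Lemma~\ref{lem:minoration} then gives $h_{\Tour}(\kappa)\geq \log_2(\kappa(\kappa-1))$, which for finite $\kappa$ reads $\geq\lceil\log_2(\kappa(\kappa-1))\rceil$ and, using $\kappa(\kappa-1)=\kappa$ for infinite $\kappa$, reads $\geq\log_2\kappa$. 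So it remains to exhibit, for $\mu$ equal to the claimed value, a family of at most $\mu$ tournaments separating all ordered pairs.

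For the upper bound I would fix a set $X$ with $|X|=\mu$ and work inside $M:=\{(0,1),(1,0)\}^X\subseteq (2\times 2)^X$. Two observations drive the construction: $M$ is an antichain (two distinct members differ in some coordinate, where one is $(0,1)$ and the other $(1,0)$, and these are incomparable in $2\times 2$); and the involution $\overline{\ \cdot\ }$ restricts to a fixed-point-free involution of $M$, so $M$ breaks into $|M|/2=2^{\mu-1}$ two-element orbits. I would then fix a tournament (say a strict linear order) $\ell$ on $V$, which selects one ordered pair from each $2$-element subset of $V$, giving $\binom{\kappa}{2}$ pairs. Provided $\binom{\kappa}{2}\leq 2^{\mu-1}$, i.e. $\kappa(\kappa-1)\leq 2^{\mu}$, I can injectively assign to each $(x,y)\in\ell$ a distinct orbit, set $\varphi(x,y)$ to a chosen representative of that orbit and $\varphi(y,x):=\overline{\varphi(x,y)}$. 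The resulting $\varphi\colon V\times V\setminus\Delta_V\to (2\times 2)^X$ is $1$-$1$, self-dual, and has range inside the antichain $M$.

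Applying part~2 of Lemma~\ref{lem:cns} to $\varphi$ yields a hereditarily rigid set $\mathcal R$ of irreflexive binary relations with $|\mathcal R|\leq|X|=\mu$. The step I would verify carefully is that these relations are in fact tournaments: writing $p_1$ for the first projection, the relation $\rho_u$ defined by $\rho_u(x,y)=p_1(\varphi(x,y)(u))$ satisfies $\rho_u(x,y)+\rho_u(y,x)=1$ precisely because $\varphi(x,y)(u)\in\{(0,1),(1,0)\}$ while $\varphi(y,x)(u)=\overline{\varphi(x,y)(u)}$ by self-duality. Thus $\mathcal R\subseteq\Tour(V)$, and by Lemma~\ref{lem:tour} a hereditarily rigid family of tournaments separates all ordered pairs; hence $h_{\Tour}(\kappa)\leq\mu$. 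Combined with the lower bound, this gives $h_{\Tour}(\kappa)=\mu$. The arithmetic closing the two cases is then routine: for finite $\kappa$ the least $\mu$ with $\kappa(\kappa-1)\leq 2^{\mu}$ is $\lceil\log_2(\kappa(\kappa-1))\rceil$; for infinite $\kappa$ one has $\binom{\kappa}{2}=\kappa$ and $2^{\mu-1}=2^{\mu}$ (a separating family must be infinite), so the condition becomes $\kappa\leq 2^{\mu}$ and the least such $\mu$ is $\log_2\kappa$.

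The genuine content, and the step I expect to be the main obstacle, is the passage from abstract $0$-$1$ profiles to actual tournaments. For the reflexive/irreflexive relations of Theorem~\ref{thm:hered-rigidity-bound} one has the whole Sperner middle level of $(2\times 2)^X$, of size $\binom{2\mu}{\mu}$, at one's disposal; the antisymmetry forced on tournaments ($\rho(x,y)+\rho(y,x)=1$) confines the admissible profiles to $M$, shrinking the usable antichain to the $2^{\mu-1}$ complementary pairs counted above. It is exactly the choice to let $\varphi$ take values in $M$ and to extend it self-dually that guarantees Lemma~\ref{lem:cns}(2) returns tournaments rather than arbitrary binary relations, and that pins the bound at $2^{\mu}$ rather than $\binom{2\mu}{\mu}$.
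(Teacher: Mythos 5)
Your proposal is correct; I see no gaps in it. It does, however, take a recognizably different route on the upper bound than the paper's own proof. Both arguments share the same skeleton: the lower bound is Lemma~\ref{lem:minoration}, and the matching construction rests on the same count, namely that a reference tournament $\ell$ picks out $\kappa(\kappa-1)/2$ ordered pairs which must be injected into a structure of size $2^{\mu-1}$. But the paper realizes this directly: it fixes an index set $Z$ of cardinality $\lceil\log_2(\kappa(\kappa-1))\rceil$, chooses a $1$-$1$ map $\varphi$ from $\ell$ into $2^{Z\setminus\{\ell\}}$, defines each tournament $\ell_k$ by putting $(x,y)\in\ell_k$ iff either $(x,y)\in\ell$ and $\varphi(x,y)(k)=1$, or $(y,x)\in\ell$ and $\varphi(y,x)(k)=0$, and takes the separating family to be $\{\ell\}\cup\{\ell_k : k\in Z\setminus\{\ell\}\}$; thus the reference tournament $\ell$ itself sits in the family as the ``sign'' coordinate, and separation is checked by hand, with no appeal to the rigidity machinery. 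You instead manufacture the family abstractly: a $1$-$1$ self-dual map into the sub-antichain $M=\{(0,1),(1,0)\}^X$ of $(2\times 2)^X$, then Lemma~\ref{lem:cns}(2) to produce a hereditarily rigid family, then the extra verification (absent from the paper's proof of this theorem, and correctly justifiable from $\tilde\theta\circ\tilde p_{\mathcal R}=\varphi$) that the resulting relations are tournaments, and finally Lemma~\ref{lem:tour} to convert hereditary rigidity back into separation. The two constructions are essentially conjugate: the paper's family corresponds, in your setup, to normalizing the orbit representatives so that one reserved coordinate reads $(1,0)$ exactly on $\ell$. Your packaging buys a conceptual dividend the paper's does not, namely an explicit parallel with Theorem~\ref{thm:hered-rigidity-bound} showing that antisymmetry shrinks the usable antichain from the Sperner middle level of size $\binom{2\mu}{\mu}$ down to the $2^{\mu}$-element diagonal $M$, which is exactly why tournaments give the bound $2^{\mu}$ rather than the Sperner number; the paper's version is shorter, self-contained and needs no check that the output relations are tournaments, since they are built as such. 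One small notational point in your infinite case: rather than writing $2^{\mu-1}=2^{\mu}$, simply note that for infinite $\mu$ the set $M$, of cardinality $2^{\mu}$, splits into $2^{\mu}$ two-element orbits under the involution.
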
 
\begin{proof} From the lemma above we have $h_{\Tour} (\kappa)\geq  \log_2(\kappa \cdot (\kappa-1))$. For the reverse inequality, let $Z$ be  a subset  of cardinality $\lceil \log_2(\kappa\cdot (\kappa-1)) \rceil$ of $\Tour(V)$. Fix $\ell\in Z$. Choose a $1$-$1$ map $\varphi$ from $\ell$ into $2^{Z\setminus \{\ell\}}$. For $k\in Z\setminus \{\ell\}$ set  $\ell_k:  = \{(x,y)\in \ell: \varphi(x,y)(k)=1\}\cup\{(x,y): (y,x)\in \ell \; \text{and}\; \varphi(y,x)(k)=0\}$.  It is straightforward to check that  $X:= \{\ell_k: k\in Z\setminus \{\ell\}\}\cup \{\ell\}$ is a separating family of tournaments. Inequality  $h_{\Tour} (\kappa)\leq  \log_2(\kappa\cdot (\kappa-1))$ follows.
\end{proof}

\begin{remark}
We could choose $Z\subseteq Lin(V)$ in the proof above. However, there is a priori no much relationship  between $Z$ and $X$. 

\end{remark}
\section{The case of linear orders}\label{linear order}

On $\{1, \dots, m\}$ the $m$ cyclic permutations of the natural order form  a separating family, hence  $h_{\Lin} (m)\leq m$ for every integer $m$. Due to the minoration of  $h_{\Lin}(m)$ by $\lceil \log_2(m\cdot (m-1))\rceil$ we get the equality for $3\leq m\leq 5$. Here is an example of  a separating  family of $5$ linear orders on a $6$-element set proving that
$h_{\Lin} (6)=5$. We give these orders  by the five following strings:
$123456; 136542;  216543; 432165;  532146$.  

With the proposition below we get $h_{\Lin} (7)=6$. We do  not know if $h_{\Lin} (8)=6$. 
   
\begin{proposition} Let $m\in \N$. Then $h_{\Lin}(m) \leq h_{\Lin}(m+1)\leq h_{\Lin}(m)+1$. 
\end{proposition}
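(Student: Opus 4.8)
The two inequalities call for opposite moves, and I would treat them separately: the left one by restriction, the right one by a one-point extension together with one extra order.

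For $h_{\Lin}(m)\le h_{\Lin}(m+1)$ I would start from a separating family $\mathcal R'$ of linear orders on an $(m+1)$-element set $V'$ with $\vert \mathcal R'\vert = h_{\Lin}(m+1)$, fix an arbitrary $v\in V'$, and restrict every member to $V:=V'\setminus\{v\}$. Each $\ell_{\restriction V}$ is again a linear order, and any two distinct ordered pairs of distinct elements of $V$ are in particular such pairs in $V'$; a member of $\mathcal R'$ separating them restricts to a member of $\mathcal R'_{\restriction V}$ separating them, since all four points lie in $V$ and the relevant values are unchanged by restriction. Hence $\mathcal R'_{\restriction V}$ is a separating family on $V$ of size at most $h_{\Lin}(m+1)$, which gives the inequality.

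For $h_{\Lin}(m+1)\le h_{\Lin}(m)+1$, the substantial direction, I would fix a separating family $\mathcal R=\{\ell_1,\dots,\ell_n\}$ of linear orders on an $m$-element set $V$ with $n=h_{\Lin}(m)$, set $V':=V\cup\{v\}$ with $v\notin V$, and construct a separating family of $n+1$ orders on $V'$. The plan is to choose a distinguished point $y_0\in V$, to extend each $\ell_i$ to a linear order $\ell_i'$ on $V'$ by inserting $v$ \emph{immediately below} $y_0$, and to adjoin one new order $\ell_0$ having $v$ as least and $y_0$ as greatest element. The point of the insertion is that $\ell_i'(v,x)=\ell_i(y_0,x)$ for every $x\in V\setminus\{y_0\}$, so on the coordinates indexed by $\mathcal R$ the profile of the new pair $(v,x)$ equals the old profile $p_{\mathcal R}(y_0,x)$, while $(v,y_0)$ receives the constant profile $(1,\dots,1)$ and the reversed pairs behave dually. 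Since $\mathcal R$ is separating, the old pairs keep pairwise distinct profiles, so only pairs meeting $v$ can clash, and on the $\mathcal R$-coordinates each pair $(v,x)$ clashes with exactly the single old pair $(y_0,x)$.

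The heart of the argument, and the step I expect to be the main obstacle, is to guarantee that every resulting collision class has only two members, so that the single extra bit supplied by $\ell_0$ can break it: concretely $\ell_0$ must satisfy $\ell_0(v,x)\neq\ell_0(y_0,x)$ for every $x\in V\setminus\{y_0\}$, which holds precisely because $v$ and $y_0$ are the two extremes of $\ell_0$ and every other point lies strictly between them. The sole threat to the two-element bound comes from the constant profiles $(0,\dots,0)$ and $(1,\dots,1)$: if some old pair $(a_0,b_0)$ also carries the all-ones profile, then $(v,y_0)$, $(a_0,b_0)$, and possibly a third pair might share it. Here separation is decisive, since an injective profile map forces at most one old pair to realize each constant profile; choosing $y_0\notin\{a_0,b_0\}$, which is possible once $m\ge 3$, keeps the all-ones class equal to $\{(v,y_0),(a_0,b_0)\}$, and a free choice in the middle of $\ell_0$ (placing $b_0$ before $a_0$) separates it, the all-zeros class being handled dually. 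This is exactly what pins down both the insertion position and the shape of $\ell_0$, and it yields a separating family of $n+1$ orders on $V'$; the very small values are covered directly by the computed values of $h_{\Lin}$.
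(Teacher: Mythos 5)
Your proof of the second inequality is correct (for $m\ge 3$) and shares the paper's overall architecture: insert the new point immediately adjacent to a fixed anchor in every order of the given separating family, then adjoin one extra order placing the new point and the anchor at opposite ends. Where you genuinely diverge is in how the single dangerous collision is broken. The paper lets the side of the insertion (just before or just after the anchor, which it takes to be the top element $m$ of one of the orders) vary from one order to the next: recording these choices as a $0$-$1$ sequence, it argues by counting that some sequence differs from all $\frac{m(m-1)}{2}$ existing profiles, so the pair formed by the anchor and the new point collides with nothing. You instead insert $v$ always just below $y_0$, so $(v,y_0)$ carries the all-ones profile, and you defuse the unique possible all-ones collision $(a_0,b_0)$ by choosing $y_0\notin\{a_0,b_0\}$ and putting $b_0$ before $a_0$ inside $\ell_0$; that is, you replace the paper's counting argument by the injectivity of the profile map (at most one old pair is all-ones) plus a free interior choice in the extra order. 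This is a real gain in robustness: the paper's counting display, $2^{n-1}-\frac{m(m+1)}{2}\ge 2^{\frac n2 -1}\ge 1$, is garbled (the subtracted term should be $\frac{m(m-1)}{2}$, and the claimed lower bound $2^{\frac n2-1}$ does not follow from $m(m-1)\le 2^n$); what is true and all that is needed is that at least one sequence of length $n-1$ is unused, which holds for $m\ge 3$ because $m(m-1)$ is then never a power of $2$. Your argument sidesteps that issue entirely, at the same price of requiring $m\ge 3$. Your treatment of the first inequality by restriction is the routine argument the paper dismisses as trivial.

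One caveat, which you share with the paper rather than introduce: at $m=2$ the statement itself is false, since $h_{\Lin}(2)=1$ while the lower bound of Lemma \ref{lem:minoration} gives $h_{\Lin}(3)\ge\lceil\log_2 6\rceil=3$. So your closing remark that the very small values are covered directly by computed values cannot be made good at $m=2$; the paper's proof silently breaks there as well, since $2\cdot 1=2^1$ leaves no free sequence. The proposition should be read with the restriction $m\ge 3$ (it does hold, trivially, at $m=1$), and within that range your construction is sound.
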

\begin{proof}The first inequality is trivial. For the  second,  let $V:= \{1, \dots, m\}$. Let $\mathcal R:= (\leq_k)_{1\leq k\leq n}$, with $n:=h_{\Lin}(m)$,  be a separating family of linear orders $\leq_k$ on $V$.  Our aim is to extend these linear orders on $V\cup \{m+1\}$ and,  with an extra linear order,   obtain a separating family.  We suppose  that   $\leq_1$ is the natural order on $V$ and we add $m+1$ just after $m$ in $\leq_1$.   For each $k$, $2\leq k\leq n$, we insert $m+1$ just before or just after $m$ in $\leq_k$. These choices are decided by a $0-1$-sequence $s(m, m+1)$   of length $n-1$ that we are going to define. For $1\leq i<j\leq m$,  let $s(i,j):= (s_k(i,j))_{2\leq k\leq n}$, where $s_k(i, j)=1$ if $i<_kj$ and $0$ otherwise. Since $\mathcal R$ is separating, $m(m-1) \leq 2^n$, hence $2^{n-1}-\frac{m(m+1)}{2}  \geq  p:= 2^{\frac{n}{2} -1}\geq 1$. Hence,  there are at least $p$ $0$-$1$-sequences  of length $n-1$,  which are distinct of all the $\frac{m(m-1)}{2}$ sequences $s(i,j)$. Let $s(m,m+1)$ be such a sequence. Extend the orders as said, and add a new  linear order, say  $\leq_{n+1}$, for which $m+1$ is the least element and $m$ the last one. We show that all  new profiles, still denoted $s(i,j)$,  are distinct, thus proving that the new family of orders is separating. The claimed inequality follows from the following observations. 

1) $s(i, m) \not = s(i,m+1)$ for   $i=2, m-1$   since  
$s_{n+1}(i,m+1)=0\not = 1= s_{n+1} (i, m)$.  

2) $s(m, m+1)\not =s(i, j)$ for $i<j\leq m$. This is just the choice of  $s(m, m+1)$. 

3) $s(i, m+1) \not = s(j, m+1)$ for $i<j<m+1$. Indeed, since for every $k$, $2\leq k\leq n$,  $m+1$ is immediately before or after $m$ in $\leq_k$ we have $s_k(h, m)  = s_k(h, m+1)$ for all $h<m$. Hence $s(h, m)= s(h, m+1)$,  and in particular  $s(i, m)= s(i, m+1)$ and $s(j, m)= s(j, m+1)$. Since $\mathcal R$  is separating,  we have $s(i, m)\not =s(j,m)$, hence $s(i, m+1) \not = s(j, m+1)$. \end{proof}

\section{Density}\label{section:density}

Let $T$ be a topological space and $D$ be a subset of $T$. An element $x$ of $T$ is \emph{adherent} to  $D$ if every open set containing $x$  meets $D$. The \emph{topological closure} (or the \emph{adherence}) of $D$  is the set of  elements of $T$ adherent to $D$, denoted by $\overline D$; this  is the least  closed set containing $D$. The subset $D$ is   \emph{dense} in $T$ if $\overline D=T$. The \emph{density character} of $T$,  denoted by $d(T)$,  is the minimum cardinality of a dense subset. Let $V$ be a set; if $\ell$ is a linear order on $V$, we may equip $V$ with the \emph{interval topology}  whose open sets are generated by the \emph{open intervals} of this order, i.e., sets of the form $]a, b[$ with $a<_{\ell} b$. We will denote by $d(V, \ell)$ the density of this space. We may note that the density of any subset with the interval topology is at most the density of $(V, \ell)$.  
We  may equip the power set $\powerset (V)$  with the product topology, a basis of open sets being  made of sets of the form $O(F, G):= \{X\in \powerset (V): F\subseteq X\subseteq V  \setminus G\}$ where $F$, $G$ are finite subsets of $V$. This is the well known Cantor space. A basic result in topology due to Hausdorff  (1936) (see \cite{vandouwen} in the Handbook on Boolean Algebras, vol. 2 p. 465)  asserts that $d(\powerset (V))= \log_2(\vert V\vert)\;  \text{provided that}\;  V\;  \text{is infinite}$.   

Let $m$ be an integer and $\mathcal R$ be  a set of $m$-ary  relations on a set $V$;  each $\r\in \mathcal R$ can be  viewed as a map from $V^m$ to $\{0,1\}$ as well as a subset of $V^m$. Viewing $\mathcal R$  as a subset of $2^{V^m}$, we  may equip $\mathcal R$ of the topology induced by the product topology on $2^{V^m}$.  Let $D$ be a subset of $\mathcal R$ and $k$ be an integer. A relation $\r\in \mathcal R$ is \emph{$k$-adherent} to  $D$ if on  every $k$-element subset $F$ of $V^m$, $\r$ coincides with some $\r'\in D$. The \emph{$k$-adherence of $D$ in $\mathcal R$} is the set of $\r\in \mathcal R$ that are $k$-adherent to $D$. If this set is $\mathcal R$, $D$ is said \emph{$k$-dense}. The relation $\r$ is  adherent to $D$ in the topological sense if it is $k$-adherent for every integer $k$. The adherence of $D$ is often called the \emph{local closure} of $D$. 
The topological density of several sets of relations can be computed.  For example, the following sets of relations on a set of cardinality $\kappa$ have density $\log_2(\kappa)$:
the collection of $n$-ary relations, of binary relations, of directed graphs without loops, of undirected graphs, of tournaments. Indeed, all of these sets are homeomorphic to the powerset of some set of cardinality $\kappa$. We just illustrate  this fact with the collection of tournaments.  Fix a tournament $\tau: =(V, E) $ on a set $V$ of size $\kappa$. To each map $f: E\rightarrow 2$ associate the tournament $\tau_f$ whose arc set is $E_f:=f^{-1} (1)\cup \{ (u, v)\in V^2\setminus {\Delta_V}: f(v,u)=0\}$. This defines a  homeomorphism from $2^E$ onto the set of tournaments. \\

If  $X$ is  a set of binary relations $\r$ on $V$, we set $X^{-1}:= \{\r^{-1}: \r \in X\}$. 

\begin{lemma}\label{lem:separation}
Let $\mathcal R $ be a collection  of tournaments on $V$ such that two distinct ordered pairs $(x,y), (x',y')\in V\times V\setminus \Delta_{V}$ are always separated by some member of $\mathcal R$.  The following properties are equivalent for a non-empty subset $X$ of $\mathcal R$:  
\begin{enumerate}[{(i)}]\item All distinct ordered pairs $(x,y), (x',y')\in V\times V\setminus \Delta_{V}$ can be   always separated by some member of $X$;
\item $X\cup X^{-1}$ is $2$-dense in $\mathcal R$. 
\end{enumerate}
\end{lemma}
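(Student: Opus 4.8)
The plan is to prove the two implications separately, exploiting the single identity that makes the inverses useful: for a tournament $\ell$ one has $\ell^{-1}(x,y)=\ell(y,x)=1-\ell(x,y)$, so passing from $\ell$ to $\ell^{-1}$ complements the value of $\ell$ on \emph{every} off-diagonal pair simultaneously. Throughout, recall that $2$-density of $X\cup X^{-1}$ in $\mathcal R$ means that for each $\rho\in\mathcal R$ and each two-element subset $F=\{p,q\}$ of $V^2$ there is $\sigma\in X\cup X^{-1}$ agreeing with $\rho$ on $F$, and that every tournament vanishes on $\Delta_V$, so a coordinate of $F$ lying on the diagonal imposes no constraint.

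For $(ii)\Rightarrow(i)$, I would fix two distinct ordered pairs $(x,y),(x',y')\in V\times V\setminus\Delta_V$ and use the standing hypothesis that $\mathcal R$ is separating to pick $\rho\in\mathcal R$ with $\rho(x,y)\neq\rho(x',y')$. Applying $2$-density to this $\rho$ and $F:=\{(x,y),(x',y')\}$ yields $\sigma\in X\cup X^{-1}$ agreeing with $\rho$ on $F$, so $\sigma$ separates the two pairs. Writing $\sigma=\ell$ or $\sigma=\ell^{-1}$ with $\ell\in X$, the complementation identity shows that $\ell$ itself separates them, giving $(i)$. This direction is short, and it is precisely here that the hypothesis that $\mathcal R$ is itself separating is used.

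For $(i)\Rightarrow(ii)$, I would fix $\rho\in\mathcal R$ and $F=\{p,q\}$ with $p=(a,b)$, $q=(c,d)$, seeking $\sigma\in X\cup X^{-1}$ with $\sigma(p)=\rho(p)$ and $\sigma(q)=\rho(q)$. I would first dispose of the cases where $p$ or $q$ is diagonal: such a coordinate is automatically matched, so (using $X\neq\emptyset$) a single $\ell\in X$ or its inverse handles the remaining off-diagonal coordinate. In the core case, $p$ and $q$ both off-diagonal, the complementation identity reduces the task to finding $\ell\in X$ whose pattern has the same \emph{agreement type} as the target, i.e. $\ell(p)=\ell(q)\iff\rho(p)=\rho(q)$; once such $\ell$ is found, either $\ell$ or $\ell^{-1}$ matches $\rho$ on both coordinates. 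I would obtain such an $\ell$ from $(i)$ by separating the appropriate pair: if $\rho(p)\neq\rho(q)$, separate $p$ and $q$ directly; if $\rho(p)=\rho(q)$, separate $p$ and $\overline{q}:=(d,c)$ instead, which via $\ell(\overline q)=1-\ell(q)$ forces $\ell(p)=\ell(q)$.

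The main obstacle I anticipate is the bookkeeping in this last sub-case: to invoke $(i)$ on $p$ and $\overline q$ I must know $p\neq\overline q$, and this is exactly where the tournament (irreflexivity) structure re-enters. If $p=\overline q$ then $\rho(p)=\rho(\overline q)=1-\rho(q)\neq\rho(q)$, contradicting $\rho(p)=\rho(q)$; hence the sub-case $\rho(p)=\rho(q)$ automatically guarantees $p\neq\overline q$, and the separation hypothesis $(i)$ applies. The verification that $\ell$ or $\ell^{-1}$ hits the exact target pattern in each sub-case is then a routine two-line check using the complementation identity.
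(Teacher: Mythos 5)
Your proof is correct and follows essentially the same route as the paper's: for $(i)\Rightarrow(ii)$ you separate $p$ from $q$ when $\rho(p)\neq\rho(q)$ and $p$ from $\overline q$ when $\rho(p)=\rho(q)$, then pass to $\ell$ or $\ell^{-1}$ (the paper does the identical case analysis, just split over the four value patterns $(\alpha,\beta)$), and the $(ii)\Rightarrow(i)$ direction is the same density-plus-complementation argument. You even tidy up two points the paper leaves implicit --- the diagonal coordinates in the definition of $2$-density and the check that $p\neq\overline q$ before invoking $(i)$ --- but this is elaboration, not a different method.
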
 
\begin{proof}
$(i)\Rightarrow (ii)$ Let $\rho \in \mathcal R$. Let $U$ be a two element subset of $V\times V\setminus \Delta_{V}$. We need to prove  that some $\tau\in X\cup X ^{-1}$ coincides with $\rho$ on $U$.  
 Let $(x,y)$ and $(x',y')$ be two distinct ordered pairs such that $U= \{(x,y),  (x',y')\}$. Let $\alpha:= \rho(x,y)$ and $\beta:=\rho(x',y')$. If $(\alpha, \beta)=(0,0)$, let $\tau$ be a tournament separating   $(x,y)$  and $(y',x')$. Then $\tau$ or $\tau^{-1}$ coincides with $\rho$ on $U$. If $(\alpha, \beta)=(0, 1)$, let $\tau$ separating   $(x,y)$ and $(x',y')$. Then $\tau$ or $\tau^{-1}$ coincides with $\rho$. The two other cases are similar. 
 
$(ii)\Rightarrow (i)$ Let $(x,y)$ and $(x',y')$ be two distinct ordered pairs. These pairs are separated by some member $\rho$ of $A$. Since $X\cup X^{-1}$ is $2$-dense in $A$ there is some  $\tau \in X\cup X^{-1}$ which coincides with $\rho$ on $U: =\{(x,y), (x',y')\}$. Hence $\tau$ separates $(x,y)$ and $(x',y')$.  \end{proof}

Let $\Lin(V)$ be the set of linear orders on an infinite  set $V$ of cardinality $\kappa$. We may identify each linear order with a subset of $V\times V\setminus \Delta_V$. By definition a subset $X$ of $\Lin (V)$ is dense if every nonempty open set of $\Lin (V)$ meets $X$. This amounts to the fact that for every $\rho \in \Lin(V)$,  every  finite subsets $F$, $G$ of $V\times V\setminus \Delta_{V}$), every set  $O(F, G):= \{\tau\in \powerset (V\times V\setminus \Delta_{V} ): F\subseteq \tau\subseteq V\times V\setminus G\}$ containing $\rho$ meets $X$. 

According to Lemma \ref{lem:linorderseparate}, we may apply Lemma \ref{lem:separation}.  This yields: 
\begin{lemma}\label{lem:density} If $X$ is dense in $\Lin (V)$, then $X$ separates all distinct ordered pairs. 
Hence:
$h_{\Lin(V)} (\vert V\vert ) \leq d(\Lin (V)$.
\end{lemma}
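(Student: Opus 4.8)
The plan is to deduce both assertions directly from the machinery already assembled, chiefly Lemma \ref{lem:linorderseparate} and Lemma \ref{lem:separation}, so the argument should be short.

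First I would unwind the meaning of topological density. From the description of the product topology on $\powerset(V\times V\setminus \Delta_V)$ given above, a subset $X$ of $\Lin(V)$ is dense precisely when, for every $\rho\in\Lin(V)$ and every finite $E\subseteq V\times V\setminus\Delta_V$, there is some $\tau\in X$ that coincides with $\rho$ on $E$; in other words, topological density is exactly $k$-density for all integers $k$. In particular a dense $X$ is $2$-dense, and since $X\subseteq X\cup X^{-1}$, the larger family $X\cup X^{-1}$ is $2$-dense in $\Lin(V)$ as well.

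Second, I would justify invoking Lemma \ref{lem:separation} with $\mathcal R:=\Lin(V)$. Each linear order is viewed here as a tournament, and by Lemma \ref{lem:linorderseparate} any two distinct ordered pairs of $V\times V\setminus\Delta_V$ are separated by some linear order, so $\Lin(V)$ meets the standing hypothesis of Lemma \ref{lem:separation}. Feeding in the $2$-density of $X\cup X^{-1}$ obtained above, the implication $(ii)\Rightarrow(i)$ of that lemma gives that $X$ itself separates every pair of distinct ordered pairs. (Alternatively, and more directly: for distinct $(x,y),(x',y')$, pick by Lemma \ref{lem:linorderseparate} a linear order $\rho$ with $\rho(x,y)\neq\rho(x',y')$, and apply $2$-density of $X$ to the two-element set $\{(x,y),(x',y')\}$ to obtain $\tau\in X$ agreeing with $\rho$ there, whence $\tau(x,y)\neq\tau(x',y')$.) This settles the first assertion.

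Finally, for the displayed inequality I would choose a dense subset $D\subseteq\Lin(V)$ with $\vert D\vert=d(\Lin(V))$ realizing the density character. By the first part, $D$ is a separating family, so by the definition of $h_{\Lin(V)}$ (Definition \ref{def:h}) as the least cardinality of a separating subset of $\Lin(V)$, we conclude $h_{\Lin(V)}(\vert V\vert)\leq\vert D\vert=d(\Lin(V))$. I do not anticipate a real obstacle in any of this: the only points needing care are the translation between topological density and $2$-density, and the observation that enlarging $X$ to $X\cup X^{-1}$ preserves $2$-density, so that Lemma \ref{lem:separation} applies verbatim.
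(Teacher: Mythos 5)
Your proof is correct and follows essentially the same route as the paper, which obtains the lemma in one line by noting that Lemma \ref{lem:linorderseparate} makes Lemma \ref{lem:separation} applicable to $\mathcal R:=\Lin(V)$ and then using its implication $(ii)\Rightarrow(i)$. Your write-up simply makes explicit what the paper leaves implicit: the translation of topological density into $2$-density of $X\cup X^{-1}$, and the final step of taking a dense set of cardinality $d(\Lin(V))$ to conclude $h_{\Lin(V)}(\vert V\vert)\leq d(\Lin(V))$.
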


An alternative condition to density is this:
\begin{lemma}\label{prop:density} $X$ is dense in $\Lin (V)$ if and only if for every non-negative integer $m$  and every $m$-tuple $(a_1, \dots, a_m)$  of distinct elements of $V$ there is some $\ell\in X$ such that $a_1<_{\ell} \dots <_{\ell}a_m$. 
\end{lemma}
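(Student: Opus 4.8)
The plan is to prove the two implications directly from the description of the basic open sets $O(F,G)$ of $\Lin(V)$ recalled just before the statement, exploiting that each such set constrains only finitely many elements of $V$. Recall that, by that description, $X$ is dense precisely when $X\cap O(F,G)\neq\emptyset$ for every $\rho\in\Lin(V)$ and all finite $F,G\subseteq V\times V\setminus\Delta_V$ with $\rho\in O(F,G)$.

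For the easy direction, assume $X$ is dense and fix distinct $a_1,\dots,a_m\in V$. I would set $F:=\{(a_i,a_j):1\le i<j\le m\}$ and consider $O(F,\emptyset)$. Since the chain $a_1<\dots<a_m$ extends to a linear order on $V$, the set $O(F,\emptyset)\cap\Lin(V)$ is a nonempty open subset of $\Lin(V)$; density then yields some $\ell\in X\cap O(F,\emptyset)$, and $F\subseteq\ell$ says exactly that $a_1<_\ell\cdots<_\ell a_m$.

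For the converse, assume the chain-realization property and fix $\rho\in\Lin(V)$ and finite $F,G$ with $\rho\in O(F,G)$. I would let $W$ be the finite set of all coordinates occurring in the pairs of $F\cup G$, enumerate it as $a_1,\dots,a_m$ so that $a_1<_\rho\cdots<_\rho a_m$, and apply the hypothesis to this tuple to obtain $\ell\in X$ with $a_1<_\ell\cdots<_\ell a_m$. The point to verify is that $\ell\in O(F,G)$: since $\ell$ and $\rho$ induce the same linear order on $W$, and every pair of $F\cup G$ has both coordinates in $W$, we get $(x,y)\in\ell\iff(x,y)\in\rho$ for all $x,y\in W$; hence $F\subseteq\rho$ gives $F\subseteq\ell$, and $G\cap\rho=\emptyset$ gives $G\cap\ell=\emptyset$. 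Thus $X$ meets every nonempty basic open set, so $X$ is dense.

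I expect the only genuine subtlety to be the translation of the \emph{forbidden} part $G$ of the basic open set into an order condition: because $\rho$ is total, $G\cap\rho=\emptyset$ means precisely that each pair $(x,y)\in G$ is reversed by $\rho$, and one must check that forcing $\ell$ to realize the \emph{same} increasing enumeration of $W$ as $\rho$ simultaneously preserves the $F$-constraints and these reversed $G$-constraints. Everything else is the routine observation that a basic open set depends on only finitely many coordinates, so realizing one correctly ordered finite chain is enough.
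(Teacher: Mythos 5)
Your proof is correct and takes essentially the same route as the paper's: the forward direction realizes the tuple inside a basic open set $O(F,\emptyset)$ that is nonempty because the finite chain extends to a linear order on $V$, and the converse enumerates the finitely many coordinates occurring in $F\cup G$ in $\rho$-increasing order and applies the realization hypothesis to get $\ell\in X$ agreeing with $\rho$ there. The only (immaterial) differences are that you take $F$ to consist of all pairs $(a_i,a_j)$ with $i<j$ rather than just the consecutive ones, and you spell out the verification of the $G$-constraints that the paper leaves implicit.
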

\begin{proof}Suppose that $X$ is dense. Let $m\in \N$ and  a $m$-tuple $(a_1, \dots a_m)\in V^m$ with distinct entries. Let $\rho\in \Lin (V)$ such that  $a_1<_{\rho}\dots <_{\rho}a_m$. Set $F:= \{ (a_i, a_{i+1}): 1\leq i<m\}$ and $G:= \emptyset$. Then $\r\in  O(F, G)$ thus there is some $\ell \in X\cap O(F, G)$ that is  $a_1<_{\ell}\dots  <_{\ell}a_m$. Conversely, let $\r\in O(F,G)\cap \Lin (V)$. We prove that $O(F,G)$ meets $X$. Let $A$ be a finite subset of $V$  such that $\bigcup \{\{x,y\} : (x,y) \in F\cup G\}\}\subseteq A$. Let $(a_1, \dots a_m)$ be an enumeration of elements of $A$ so that $a_1<_{\r}\dots<_{\r}a_m$. Then there is $\ell \in X$ such that the ordering on $A$ coincides with this of $\r$. It follows that $\ell \in O(F,G)\cap X$.  \end{proof}

It is easy to show that  $d(\Lin(V))=\log_2(\vert V\vert)$  if  $\aleph_0\leq \vert V\vert \leq 2^{\aleph_0}$ (e.g., see the proof of Theorem \ref{thm:cardinalinvariant} below). Provided that GCH  holds, the  equality $d(\Lin(V))=\log_2(\vert V\vert)$ holds for every infinite set; but we do not know if this holds  without any set theoretical hypothesis.

To prove  that this equality holds, we introduce the following related parameters. 
For a cardinal $\kappa$, let $\delta(\kappa)$ be the least cardinal $\mu$ such that there is a linear order $\ell$ on a set of cardinality $\kappa$ admitting a dense subset  of cardinality $\mu$ and for a cardinal $\mu$, let $\ded (\mu)$ be the supremum of cardinals $\kappa$ such  that  there is a  chain of cardinality $\kappa$ and density at most $\mu$. Recall that if GCH holds then $2^{<\nu}= \nu$ for every infinite cardinal $\nu$. Next, under this cardinality condition, $\delta (\kappa)= \log_2(\kappa)$ and $\ded(\mu)= 2^\mu$ for every infinite $\kappa$ and  $\mu$. (hint: let $W$ be a well ordered chain of  cardinality $\mu$ and $2^{W}$ be the power set lexicographically ordered. The density of this chain, as any subchain,  is at most $2^{<\mu}$. By definition, if  $\mu:= \log_2( \kappa)$, $2^{< \mu} \leq \kappa\leq 2^{\mu}$, thus $\delta(\kappa)\leq 2^{<\mu}$. If  $2^{<\mu}= \mu$ then  $\delta (\kappa)\leq \mu$ and hence $\delta(\kappa)= \log_2(\kappa)$. The proof of the second equality is similar, and shorter). Mitchell \cite{mitchell} showed that it is consistent with ZFC that for uncountable regular $\mu$, $\ded(\mu)<2^\mu$ \cite{mitchell} and in fact that  no chain  of cardinality $2^{\kappa}$ has a  dense subchain of cardinality $\kappa$. In particular,  $\delta( \kappa)\not = \log_2(\kappa)$.   
Now we prove:
\begin{theorem}\label{thm:cardinalinvariant} For   every infinite cardinal $\kappa$ and set $V$  of cardinality $\kappa$, we have \label{dedekind}   $\log_2(\kappa) \leq h_{\Lin} (\vert V\vert ) \leq d(\Lin (V))\leq  \delta(\kappa)$. 
\end{theorem}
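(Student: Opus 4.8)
The statement is the conjunction of three inequalities, and the first two are already essentially in hand. For $\log_2(\kappa)\le h_{\Lin}(\vert V\vert)$, recall that linear orders may be viewed as tournaments and that, by Lemma \ref{lem:linorderseparate}, $\Lin(V)$ separates any two distinct ordered pairs; hence Definitions \ref{def:h} and Lemma \ref{lem:minoration} apply and yield $h_{\Lin}(\kappa)\ge \log_2(\kappa\cdot(\kappa-1))$. Since $\kappa$ is infinite we have $\kappa\cdot(\kappa-1)=\kappa$, so the right-hand side is $\log_2(\kappa)$. The middle inequality $h_{\Lin}(\vert V\vert)\le d(\Lin(V))$ is exactly Lemma \ref{lem:density}: every topologically dense subset of $\Lin(V)$ is separating. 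So everything reduces to proving $d(\Lin(V))\le \delta(\kappa)$.

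To bound $d(\Lin(V))$ I would exhibit a topologically dense subset of $\Lin(V)$ of cardinality $\delta(\kappa)$. By Lemma \ref{prop:density}, a family $X\subseteq \Lin(V)$ is dense precisely when every finite tuple of distinct elements of $V$ occurs as an increasing sequence in some member of $X$; equivalently, $X$ must induce every linear order on every finite subset of $V$. The plan is to fix a well-chosen linear order $\le_0$ on $V$ with a dense subset $D_0$ of $(V,\le_0)$ of cardinality $\delta(\kappa)$, and to take for $X$ the family of all orders obtained from $\le_0$ by cutting $V$ at finitely many points of $D_0$ and then permuting the resulting finitely many blocks, keeping $\le_0$ inside each block. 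There are at most $\vert [D_0]^{<\omega}\vert\cdot\aleph_0=\delta(\kappa)$ such orders. To realize an arbitrary finite pattern on $\{a_1,\dots,a_m\}$, list these points increasingly for $\le_0$, insert one point of $D_0$ strictly between each pair of $\le_0$-consecutive ones, and cut there: each $a_i$ is then isolated in its own block, and a suitable permutation of the blocks produces the prescribed order.

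The step that makes this work, and the main obstacle, is the insertion of a point of $D_0$ strictly between two $\le_0$-consecutive $a_i$'s: this is possible exactly when $(V,\le_0)$ has no jumps, i.e.\ is dense in itself, since at a jump no element at all, a fortiori no element of $D_0$, lies strictly between. A minimal-density chain of cardinality $\kappa$ need not be dense in itself; indeed it may have far more jumps than $\delta(\kappa)$ while keeping density $\delta(\kappa)$. So the crux is to produce a \emph{dense-in-itself} chain of cardinality $\kappa$ and density $\delta(\kappa)$, onto which $\le_0$ and $D_0$ can be transported by a bijection.

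I would construct such a chain as follows. By definition of $\delta$ there is a chain $C$ with $\vert C\vert=\kappa$ and a dense subset $D$, $\vert D\vert=\delta(\kappa)$. First kill the jumps of $D$ by passing to $D'$, the set $D\times \mathbb{Q}$ ordered lexicographically, which is dense in itself and still has cardinality $\delta(\kappa)$. Let $R:=\overline{D'}$ be its Dedekind completion. Completions of dense-in-itself chains are dense in itself; $D'$ remains order-dense in $R$, so $d(R)\le \delta(\kappa)$; and a cardinality count gives $\vert R\vert\ge\kappa$, since a maximal $D$-free interval of $C$ has at most two points (an interior third point would be isolated, hence in $D$), whence $\kappa=\vert C\vert\le 2\,\vert\overline D\vert\le 2\,\vert R\vert$ (using that $\overline D$ order-embeds into $R$ via $d\mapsto(d,0)$). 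Finally pick a dense subset $D_0$ of $R$ of cardinality $\delta(\kappa)$ and a subchain $S$ with $D_0\subseteq S\subseteq R$ and $\vert S\vert=\kappa$; because $R$ is dense in itself and $D_0$ is dense in $R$, between any two points of $S$ lies a point of $D_0\subseteq S$, so $S$ is dense in itself with dense set $D_0$, and its density is at most that of $R$ by the observation that a subchain has density at most that of the chain. Transporting $(S,\le_R)$ to $V$ furnishes the order $\le_0$ and dense set $D_0$ required in the previous paragraph, completing the proof. The delicate point throughout is this reduction to a jump-free chain without inflating the density beyond $\delta(\kappa)$; the block-and-permute construction itself is then routine.
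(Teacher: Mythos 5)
Your proof is correct, and its engine is the same as the paper's: the first two inequalities are obtained exactly as you obtain them (Lemmas \ref{lem:minoration} and \ref{lem:density}), and the third comes from the very same family of orders --- cut the chain at a finite subset $F$ of a small dense set, permute the $\vert F\vert +1$ blocks, verify density via Lemma \ref{prop:density}, and count $\delta(\kappa)\cdot\aleph_0=\delta(\kappa)$ orders. The genuine difference is your preliminary reduction to a dense-in-itself witness, which the paper does not perform: it applies the block-permutation directly to an \emph{arbitrary} chain $\ell$ with topologically dense $D$ of cardinality $\delta(\kappa)$, choosing its cut points in the half-open intervals $[x,y[\,\cap\, D$ between consecutive entries of the tuple. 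Note that closing the left end already tolerates jumps whose left endpoint lies in $D$, so your assertion that the insertion is possible ``exactly when'' the chain is dense in itself slightly overstates the obstruction; but your underlying worry is justified, because the paper's choice tacitly assumes $[x,y[\,\cap\, D\neq\emptyset$ for all $x<y$, and topological density does not guarantee this. For instance, in the split interval $[0,1]\times\{0,1\}$ with the lexicographic order, $D=\mathbb{Q}\times\{0,1\}$ is a countable dense set, yet $[(r,0),(r,1)[\,\cap\, D=\emptyset$ for irrational $r$; no finite set of cuts from $D$ can then place $(r,1)$ before $(r,0)$, so the paper's family $\mathcal C$ fails to be dense for this witness. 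Read literally, the paper's proof thus has a lacuna for badly chosen witnesses of $\delta(\kappa)$, and your construction of a dense-in-itself chain of cardinality $\kappa$ and density $\delta(\kappa)$ (pass to $D\times\mathbb{Q}$, take the Dedekind completion $R$, bound $\vert R\vert\geq\kappa$, extract a subchain through a dense set) is precisely what makes the argument airtight; its cardinality count ($D$-free intervals of $C$ have at most two points, maximal ones inject into the cuts of $D$, and the completion of $D$ embeds into $R$) is sound. One local repair: your parenthetical reason that a third point of a $D$-free interval ``would be isolated'' is not right in general; the correct argument is that if $a<b<c$ lie in a $D$-free interval then, by convexity, $]a,c[$ is a nonempty open set disjoint from $D$, contradicting the density of $D$.
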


\begin{proof} 
The first inequality is Lemma \ref{lem:minoration} and the second is Lemma \ref{lem:density}.  We prove that the third inequality holds. For that, let $\ell:= (V, \leq)$ be a linear order having a dense set $D$ of cardinality $\delta(\kappa)$. For each finite subset $F$ of 
$D$, write its  elements  in an increasing order,  $F:= d_1<\dots <d_{m-1}$ with $m=\vert F\vert +1$  and   decompose   $\ell$ into the $m$ intervals $\ell_1:= ]-\infty, d_1[, \dots, \ell_{i+1}:= [d_i, d_{i+1}[, \dots, \ell_{m}:= [d_{m-1}, +\infty[ $. Let $\mathfrak S_{m}$ be the set of permutations of $\{1, \dots, m\}$. Each permutation $\sigma \in \mathfrak S_{m}$ induces a permutation of the intervals $\ell_1, \dots, \ell_{m}$. The lexicographical sum 
$\ell_{\sigma}: = \sum_{1\leq i\leq m} \ell_{\sigma(i)}$ yields a linear order $\leq_{\sigma}$ on $V$.

\noindent{\bf Claim.} The set $\mathcal C:= \{\leq_{\sigma}: F\in D^{<\omega} \text{and}\; \sigma\in \mathfrak S_{\vert F\vert +1}\}$ is dense in $\Lin(V)$. 
 
Since the number of pairs $(F, \sigma)$ where $F\in  D^{<\omega}$ and $\sigma \in \mathfrak S_{\vert F\vert +1}$ is $\vert D\vert$  this claim suffices to prove the inequality. 

\noindent {\bf Proof of the Claim. }
According to Proposition \ref{prop:density} this amounts  prove that for every integer $m$  and every $m$-tuple $(a_1, \dots a_{m})$  of distinct elements of $V$ there is some $\leq_{\sigma} \in \mathcal C$ such that $a_1<_{\sigma}\dots <_{\sigma}a_{m}$. 
Let $\sigma\in \mathfrak S_{m}$  such that $a_{\sigma^{-1}(1)}<\dots <a_{\sigma^{-1}(m)}$ in the chain $\ell$. Due to the density of $D$, there are  $d_{i}\in  [a_{\sigma^{-1}(i)}, a_{\sigma^{-1}(i+1)}[\cap D$ for $1\leq i\leq m-1$. Let $\ell_{1}:= ]-\infty, d_1], \dots, \ell_{i+1}:= ]d_i, d_{i+1}], \dots, \ell_{m}:= ]d_{m-1}, +\infty[$. Permute these intervals according to $\sigma$. The resulting   chain $\ell_{\sigma}$ reorders the $m$-tuple $a_{\sigma^{-1}(1)}<\dots <a_{\sigma^{-1}(m)}$ as 
$a_1<_{\sigma}\dots  <_{\sigma}a_{m}$. This proves our claim.\end{proof} 

\begin{corollary} 

$\log_2(\kappa) = h_{\Lin} (\vert V\vert )=d(\Lin (V))= \delta(\kappa) \;  \text{iff}\;  \log_2(\kappa)=\delta(\kappa)$. 

\end{corollary}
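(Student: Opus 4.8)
The plan is to derive the corollary directly from Theorem~\ref{thm:cardinalinvariant}, which already does all the work: for every infinite cardinal $\kappa$ and every set $V$ with $\vert V\vert = \kappa$ it furnishes the chain of inequalities
$$\log_2(\kappa) \leq h_{\Lin}(\vert V\vert) \leq d(\Lin(V)) \leq \delta(\kappa).$$
The two implications of the stated equivalence will then be read off from this single chain, so no new construction is needed.

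For the forward implication, suppose the four cardinals $\log_2(\kappa)$, $h_{\Lin}(\vert V\vert)$, $d(\Lin(V))$ and $\delta(\kappa)$ all coincide. Then in particular the outermost two are equal, giving $\log_2(\kappa) = \delta(\kappa)$. This direction requires nothing beyond the trivial observation that equality of all members of a list entails equality of its first and last members.

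For the reverse implication I would argue by a squeeze. Assume $\log_2(\kappa) = \delta(\kappa)$. Then in the chain above the leftmost and rightmost entries are one and the same cardinal, so each of the two intermediate entries $h_{\Lin}(\vert V\vert)$ and $d(\Lin(V))$ is sandwiched between two copies of that cardinal and must therefore equal it as well. Hence all four cardinals coincide, which is the desired conclusion.

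I do not expect a genuine obstacle here: the mathematical content lies entirely in Theorem~\ref{thm:cardinalinvariant} (and, upstream, in the density computations of Section~\ref{section:density}), whereas the corollary is a purely formal consequence of a four-term inequality whose two ends have been forced to agree. The one point worth flagging is that the hypothesis $\log_2(\kappa) = \delta(\kappa)$ is not automatic: as recalled just before the theorem, Mitchell's consistency result shows that $\delta(\kappa) \neq \log_2(\kappa)$ can hold, so the equivalence is genuinely conditional rather than a theorem of ZFC, and the corollary is precisely the clean statement that pins the failure of the four equalities on the single possible failure $\log_2(\kappa) \neq \delta(\kappa)$.
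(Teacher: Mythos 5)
Your proof is correct and matches the paper's intended argument: the corollary is stated without proof precisely because it is, as you show, an immediate squeeze from the chain $\log_2(\kappa) \leq h_{\Lin}(\vert V\vert) \leq d(\Lin(V)) \leq \delta(\kappa)$ of Theorem~\ref{thm:cardinalinvariant}. Your remark about Mitchell's consistency result correctly identifies why the equivalence is genuinely conditional, which is also the point the paper makes just before the theorem.
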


\section*{Acknowledgments} {This research was completed  while the third  author  of this paper stayed at the University of Tsukuba from May 8 to July 6, 2019. Support provided by  the  JSP is gratefully acknowledged. 

The authors are grateful to the referees of this paper for their careful examination, corrections and suggestions.}

 \end{document}